\documentclass[a4paper,preprintnumbers,floatfix,superscriptaddress,aps,10pt,twocolumn,notitlepage,longbibliography,accepted=2025-08-06]{quantumarticle}
\pdfoutput=1
\usepackage[utf8]{inputenc}
\usepackage[T1]{fontenc}
\usepackage{amsmath,amsthm,amsfonts,amssymb,mathtools,dsfont}
\usepackage{marvosym}
\usepackage{graphicx}
\usepackage{xcolor}
\usepackage{appendix}
\usepackage{hyperref}
\usepackage[nolist]{acronym}
\usepackage[capitalise]{cleveref}
\usepackage{algorithm}
\usepackage[noend]{algorithmic}
\usepackage{scalerel}
\usepackage{ifthen}

\newtheorem{theorem}{Theorem}
\newtheorem{corollary}[theorem]{Corollary}
\newtheorem{lemma}[theorem]{Lemma} 

\newtheorem{definition}[theorem]{Definition}
\floatname{algorithm}{Protocol}
\crefname{algorithm}{Protocol}{Protocols}

\newtheorem*{theorem1}{Theorem 4}
\newenvironment{proofsketch}{%
  \proof}{\endproof}

\newcommand{\e}{\ensuremath\mathrm{e}} 
\renewcommand{\i}{\ensuremath\mathrm{i}} 
\DeclareMathOperator{\Tr}{Tr} 
\DeclareMathOperator{\tr}{Tr} 
\renewcommand{\Re}{\operatorname{Re}} 
\renewcommand{\Im}{\operatorname{Im}} 
\DeclareMathOperator{\U}{U}
\renewcommand{\L}{\mathcal{L}} 

\newcommand{\CC}{\mathbb{C}}
\newcommand{\RR}{\mathbb{R}}

\newcommand{\1}{\mathds{1}} 
\providecommand{\openone}{\mathds{1}}
\newcommand{\PP}{\mathbb{P}} 
\renewcommand{\Pr}{\mathbb{P}}
 %

\newcommand{\mc}[1]{\mathcal{#1}}
\renewcommand{\H}{\mc{H}} 

\DeclarePairedDelimiterX{\abs}[1]{\lvert}{\rvert}{%
  \ifblank{#1}{\,\cdot\,}{#1}
}   

\DeclarePairedDelimiterX\norm[1]\lVert\rVert{%
  \ifblank{#1}{\,\cdot\,}{#1}
}   


\newcommand{\vbt}[1]{{\ttfamily #1}} 



\DeclarePairedDelimiterX\Set[1]\{\}{%
  
  #1
}

\DeclarePairedDelimiter{\bra}{\langle}{\vert}
\DeclarePairedDelimiter{\ket}{\vert}{\rangle}
\DeclarePairedDelimiterX\braket[2]{\langle}{\rangle}%
  {#1\kern0.15ex\delimsize\vert\kern0.15ex\mathopen{}#2}
\newcommand{\bk}[2]{\braket{#1}{#2}}
\DeclarePairedDelimiterX\ketbra[2]{\vert}{\vert}%
  {#1\kern0.15ex\delimsize\rangle\delimsize\langle\kern0.15ex\mathopen{}#2}
\newcommand{\kb}[2]{\ketbra{#1}{#2}}

\newcommand{\argdot}{{\,\cdot\,}} 
\renewcommand{\vec}[1]{\mathbf{#1}} 
\DeclareMathOperator{\LandauO}{\mathrm{O}} 
\newcommand{\Sgate}{\mathrm{S}}
\newcommand{\Tgate}{\mathrm{T}}
\newcommand{\Hgate}{\mathrm{H}}
\newcommand{\Zgate}{\mathrm{Z}}
\newcommand{\Xgate}{\mathrm{X}}
\newcommand{\iS}{\mathrm{s}}
\newcommand{\iSd}{{\mathrm{s}^{\scaleobj{.8}{{-1}}}}}
\newcommand{\iH}{\mathrm{h}}
\newcommand{\iT}{\mathrm{t}}
\newcommand{\veps}{\varepsilon}
\DeclareMathOperator{\Choi}{J}
\newcommand{\psip}{\psi^\perp}
\newcommand{\phip}{\phi^\perp}
\newcommand{\T}{\intercal} 
\newcommand{\Fid}{\mathrm{F}_{\mathrm{avg}}}
\newcommand{\Pass}{\Pr[\text{\textnormal{\vbt{``pass''}}}]}

\newcommand{\su}{\mathfrak{su}}
\newcommand{\Smodel}{\big(\ketbra{+}{+},\{\Sgate,\Sgate^\dagger\},\{\ketbra{+}{+},\ketbra{-}{-}\}\big)}

\newcommand{\Universalmodel}{\big(\ketbra{+}{+},\{\Sgate,\Sgate^\dagger,\Hgate,\Tgate\},\{\ketbra{+}{+},\ketbra{-}{-}\}\big)}
\usepackage{tikz}
\usetikzlibrary{3d}
\usepackage{tikzpeople}
\usetikzlibrary{arrows,shapes}
 \definecolor{niceblue}{rgb}{0.33,0.5,0.8}
\tikzstyle{blau} = [top color=niceblue!12, bottom color=niceblue!88, 
                    shading = axis, shading angle=-10]

\newacro{RB}{randomized benchmarking}
\newacro{GST}{gate set tomography}
\newacro{POVM}{positive operator-valued measure}
\newacro{PVM}{projection-valued measure}
\newacro{CP}{completely positive}
\newacro{CPTP}{completely positive trace preserving}
\newacro{PSD}{positive semidefinite}
\newacro{NISQ}{noisy intermediate-scale quantum}
\newacro{SPAM}{state preparation and measurement} 
\newacro{ONB}{orthonormal basis}

\begin{document}
\title{Classical certification of quantum gates under the dimension assumption}
\author{Jan N\"oller}
\email{jan.noeller@tu-darmstadt.de}
\affiliation{Department of Computer Science, Technical University of Darmstadt, Darmstadt, 64289 Germany}

\author{Nikolai Miklin}
\email{nikolai.miklin@tuhh.de}
\thanks{these authors contributed equally to this work.}
\affiliation{Institute for Quantum Inspired and Quantum Optimization, Hamburg University of Technology, Germany}

\author{Martin Kliesch}
\affiliation{Institute for Quantum Inspired and Quantum Optimization, Hamburg University of Technology, Germany}

\author{Mariami Gachechiladze}
\affiliation{Department of Computer Science, Technical University of Darmstadt, Darmstadt, 64289 Germany}

\begin{abstract} 
The rapid advancement of quantum hardware necessitates the development of reliable methods to certify its correct functioning. 
However, existing certification tests fall short, as they either suffer from systematic errors or do not guarantee that only a correctly functioning quantum device can pass the test.
We introduce a certification method for quantum gates tailored for a practical server-user scenario, where a classical user tests the results of exact quantum computations performed by a quantum server. 
This method is free from the systematic state preparation and measurement (SPAM) errors. 
For single-qubit gates, including those that form a universal set for single-qubit quantum computation, we demonstrate that our approach offers soundness guarantees based solely on the dimension assumption.
Additionally, for a highly-relevant phase gate -- which corresponds experimentally to a $\pi/2$-pulse -- we prove that the method's sample complexity scales as $\mathrm{O}(\varepsilon^{-1})$ relative to the average gate infidelity $\varepsilon$.
By combining the SPAM-error-free and sound notion of certification with practical applicability, our approach paves the way for promising research into efficient and reliable certification methods for full-scale quantum computation.
\end{abstract}
\maketitle

\makeatletter 
 \hypersetup{pdftitle = {Classical certification of quantum gates under the dimension assumption},
       pdfauthor = {Jan Nöller, Nikolai Miklin, Martin Kliesch, Mariami Gachechiladze},
       pdfsubject = {Quantum computing},
       pdfkeywords = {
              quantum, certification, verification, assumptions, 
              self-testing, Bell, test, device, independent, independence, 
              verifier, user, remote, computer, computation, 
              dynamics, gate, circuit, random, sequence, algorithm, protocol, 
              average, fidelity, diamond, norm, 
              sampling, complexity, 
              gauge, freedom, NISQ
              }
      }
\makeatother

\section{Introduction}
The problem of certifying the correct functioning of quantum devices is crucial for developing quantum hardware and has naturally evolved into a field of study known as quantum system characterization (see Refs.~\cite{Eisert2020QuantumCertificationAnd,Kliesch2020TheoryOfQuantum} for reviews).
Particularly challenging is assessing the quality of quantum gates due to unavoidable \ac{SPAM} errors~\cite{Kliesch2020TheoryOfQuantum}.
They are a limiting factor in standard quantum process tomography \cite{Chuang97PrescriptionForExperimental,Mohseni08QuantumProcessTomography} and direct certification methods \cite{Liu2020EfficientVerificationOf,Zhu2020EfficientVerificationOf,Zeng2020QuantumGateVerification}. 
Two broad families of characterization methods have been developed to address this challenge: \ac{GST} \cite{MerGamSmo13,BluGamNie13,Brieger21CompressiveGateSet} and \ac{RB} \cite{EmeAliZyc05,Levi07EfficientErrorCharacterization,DanCleEme09, EmeSilMou07,KniLeiRei08,Magesan2012,Helsen20AGeneralFramework} with its many variants 
(see Refs.~\cite{Helsen20AGeneralFramework,Heinrich22GeneralGuarantees} for a recent overview).
Both types of protocols can be used to estimate gate errors in a \ac{SPAM}-robust manner by executing sequences of gates with varying lengths.
However, neither of them is suitable for certification, as they cannot definitively rule out the implementation of incorrect gates.
\ac{GST} provides a set of compatible descriptions of the underlying experiment, some of which may not be connected to the implemented operations by any physical gauge. 
\ac{RB}, in turn, already requires that the implemented gates are close to the target ones in order to interpret the output decay parameters~\cite{Helsen20AGeneralFramework}. 

In general, any certification method should satisfy two key properties: \emph{completeness} and \emph{soundness}~\cite{Kliesch2020TheoryOfQuantum}. 
In simple terms, the former means that a certification test should accept correctly implemented target quantum operations, and the latter that \emph{only} the correct implementation should be accepted.
While existing \ac{SPAM}-robust characterization methods satisfy completeness, currently, there is no method in the quantum characterization literature that is both sound and is free from \ac{SPAM} errors. 

An independent line of research, known as \emph{self-testing}~\cite{mayers2003self}, offers a framework for sound certification of quantum devices while treating them as black boxes. 
The black-box nature of certification ensures that all quantum operations, including the state preparation, quantum evolution, and the measurement are certified at the same time. 
In the context of quantum gate certification, this implies that conclusions that one draws from self-testing are free from \ac{SPAM} errors, because any such systematic errors are detected.

The self-testing literature primarily focuses on certifying entangled states and local measurements in the Bell test (see e.g., Ref.~\cite{Supic2019SelfTesting} for a review). 
However, some works have also considered quantum channels~\cite{sekatski2018certifying} or instruments~\cite{Wagner2020Device}, that can be applied to one of the subsystems in the Bell test, and also entangling interactions~\cite{Sarkar2024Model}.
Extending the framework of self-testing to include quantum dynamics, and in particular quantum gates~\cite{magniez2006self,reichardt2013Classical}, makes it relevant for the problem of testing quantum computers. 
The general idea is to combine self-testing of states and measurements with protocols such as process tomography, which typically require trusted devices, to obtain device-independent certification of quantum operations.
However, all these schemes demand an experimental setup consisting of two well-isolated parts, a requirement that is challenging to achieve within a single quantum processor. 
A recent self-testing study~\cite{metger2021self} suggested relaxing this experimental requirement by introducing computational assumptions~\cite{mahadev2018classical}. 
For these assumptions to be accepted, however, experimental capabilities beyond the reach of current quantum hardware are required~\cite{stricker2022towards}.

Finally, a related question of learnability of quantum operations was recently raised in Ref.~\cite{Huang22FoundationsForLearning}. 
There, the authors proposed a framework to investigate the learnability of the intrinsic descriptions of quantum experiments from observational data.
However, the general result obtained there relies on more assumptions than desirable for a certification task.

\begin{figure}[t!]
    \centering
    \includegraphics[scale=.9]{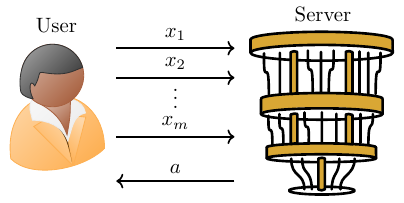}
    \caption{Schematic figure of a server-user interaction. 
    A classical user via a classical channel transmits instructions $x_1$, $x_2$, $\dots$ $x_m$ to a quantum server, which performs a quantum computation and returns the result $a$.}
\label{fig:user-server}
\end{figure}

In this work, our objective is to achieve \ac{SPAM} error-free certification of a quantum computer's correct functioning,  without requiring physical access to it, assuming as little about the quantum computer's internal functioning as possible.
This investigation is particularly relevant in a practical server-user configuration (see \cref{fig:user-server}): a classical user via a classical channel transmits gate sequences to a quantum server, which then implements them and returns the measurement outcomes. 
Since achieving this goal without any assumptions is impossible, we adopt the commonly considered assumption on the quantum system's dimension~\cite{mohan2019sequential,miklin2020semi,tavakoli2020self,Miklin21UniversalScheme,anwer2020experimental,navascues2023self}.

Since we assume that no part of the quantum apparatus, such as the measurement device, is characterized prior to the test, certification is only possible up to the degrees of freedom inherent to quantum mechanics, i.e., unitary or anti-unitary transformation (a unitary and the complex conjugation)~\cite{wigner1931gruppentheorie}.
It is important to note that this is the absolute minimum degree of freedom that cannot be excluded in black-box tests, as it corresponds to the simultaneous change of bases. 
Our method is based on a very intuitive idea of testing outcomes of exact quantum computation for quantum gate sequences that can be resolved efficiently classically.
Examples include gate sequences that compose into the identity gate or simply a zero-length sequence, for which the system is measured directly after the state preparation.
For certification of quantum gates, only the input-output correlations are used and no entanglement with an auxiliary system is required.

Here, we prove that a gate set universal for single qubit quantum operations can be certified within our framework, and analyze in detail the certification of a relevant single-qubit gate, which corresponds to the application a $\pi/2$-pulse.
The latter is routinely implemented on a quantum computer for creating an equal superposition of qubit basis states~\cite{leibfried2003quantum}.
We show that the required \emph{sample complexity}, measured in terms of the number of individual runs of the experiment~\cite{Kliesch2020TheoryOfQuantum}, scales as $\LandauO(\veps^{-1})$ with respect to the average gate infidelity $\veps$. 

This favorable scaling paves the way for a promising research in the certification of quantum systems. 
In this work, we present proof-of-principle results for a number of important single-qubit gates. 
At the same time, the proposed protocol has the potential to be extended for application in full-scale quantum computation.
While these generalizations are intriguing, they may result in less optimal guarantees. 
As a result, the scaling demonstrated in our proofs not only minimizes the resources required to certify single-qubit gates but also sets a benchmark for assessing the effectiveness of future generalizations.

The rest of the paper is organized as follows. 
In \cref{sec:protocol}, we explain the experimental setup, outline the assumptions, and describe the protocol. 
In \cref{sec:results}, we present our results on certification of single-qubit quantum operations, with the main contributions stated in \cref{th:selftest_S}, \cref{cor:sample_S}, and \cref{th:universal}.
Technical details supporting the main claims of the paper are left to the appendix. 

\section{Setup and protocol}
\label{sec:protocol}
\begin{figure}
    \centering
    \includegraphics[scale=1]{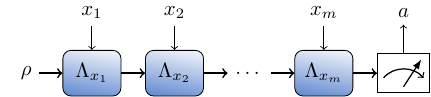}
    \caption{Scenario of the certification test. 
    A system is prepared in state $\rho$ and undergoes a series of transformations, $\Lambda_{x_1}$, $\Lambda_{x_2}$, \dots, $\Lambda_{x_m}$, specified by classical instructions $\Set{x_i}_{i=1}^{m}$ from a fixed set, after which it is measured in a fixed basis, producing outcome $a$.
    The sequences $\vec{x} = x_1x_2\dots{}x_m$ of varied length $m$ are chosen at random from the predetermined set $\mathcal{X}$.
    }
    \label{fig:scenario}
\end{figure}
The experimental setup that we consider is common to many established certification methods, such as \ac{RB}~\cite{Helsen20AGeneralFramework} and \ac{GST}~\cite{MerGamSmo13}.
The setup, or scenario as it would be called in the self-testing literature, is shown schematically in \cref{fig:scenario}.
A quantum system is prepared in some initial state, after which a sequence of quantum gates is applied to it, and it is finally measured in some fixed basis. 
For each gate in a given sequence, a label, chosen from some finite set $X$, is communicated to the quantum computer.
The certification protocol relies on a particular finite subset of sequences $\mathcal{X}\subset X^\ast$, which are determined before the protocol begins. 
In a single repetition of the protocol, a random string $\vec{x}\coloneqq x_1x_2\dots{}x_m$ is selected from $\mathcal{X}$, and after the quantum computer implements the corresponding computation, the outcome $a\in A$ of the measurement is read out, where $A$ is the set of all possible outcomes.
The certification protocol decides to proceed or abort, depending on the deterministic outcome $a_\vec{x}$, corresponding to the ideal implementation of the target quantum computation.
The length $m$ of different sequences $\vec{x}\in\mathcal{X}$ can be different, and, in particular, be zero, which we denote by the empty string $\epsilon$, and by which we mean that the system is measured directly after the state preparation.

Next, we give a definition of a quantum model.
\begin{definition}\label{def:q_model}
    A $d$-dimensional quantum model is a $3$-tuple $(\rho,\{\Lambda_x\}_{x\in X},\{M_a\}_{a\in A})$, consisting of a quantum state $\rho$, prepared at the beginning of the quantum computation, a finite set of quantum channels (gates) $\{\Lambda_x\}_{x\in X}$, from which a quantum circuit is composed, and a \ac{POVM} $\{M_a\}_{a\in A}$, measured at the end of the computation, all defined over a Hilbert space $\H$ with $\dim(\H)=d$.
\end{definition}
For a quantum model involving only unitary quantum channels, we use the corresponding unitary operators in the definition of the model. 
We treat our setup as a single black box, making only the \textit{minimal} assumptions -- such that removing any of these assumptions would render the results of this paper unattainable in the given scenario. 
\begin{itemize}
    \item[(i)] \emph{The dimension assumption.} 
    We assume that in each experimental round, the implemented state preparation is mathematically described by a density operator, the operations are described by \ac{CPTP} maps, and the measurement is described by a \ac{POVM}, all defined over a Hilbert space of a specified dimension. 
    \item[(ii)] \emph{Context independence.} 
    We assume that in each repetition of our protocol, for each label $x$, there is a single corresponding quantum channel $\Lambda_x$ that the quantum computer implements. 
    That is, for any input sequence $\mathbf{x}=x_1x_2\dots x_m$, we assume that the input state undergoes the corresponding series of transformations $\Lambda_{x_m}\circ\dots\circ\Lambda_{x_2}\circ\Lambda_{x_1}$. 
    In the certification literature, this assumption would be referred to as existence of a single-shot implementation function.
\end{itemize}
The above assumptions allow us to mathematically describe an experiment in the considered scenario by a quantum model, as in \cref{def:q_model}.
Other minor assumptions include error-free functioning of the classical part of the quantum computer, such as control circuits, and our ability to randomly select the gate sequences.
Notably, it is not necessary to assume the ability to sample the gates precisely according to a fixed distribution.

The dimension assumption is critical, because a single quantum system, with which a classical user is interacting, can be simulated by a classical system of a larger dimension~\cite{Miklin21UniversalScheme,miklin2020semi}.
We note here, that the dimension assumption also means that no side channels, e.g., operated by an adversary, are considered.
The assumption of context independence cannot be removed, because for each gate sequence one can always assign a \ac{POVM} that reproduces the required statistics, and a quantum computer can simply perform this measurement on the preparation state, ignoring the structure of the gate sequence.

Finally, when we estimate the sample complexity of our protocol, we need to use an assumption (iii) of \emph{independence of repetitions}.
As we argue below, (iii) is a part of (i), but we separate it here for clarity. 
We need the independence of repetitions to treat events in different repetitions of the protocol as statistically independent. 
However, the dimension assumption already implies that there is a tensor product between quantum models implemented in different rounds of the protocol, which leads to the independence of the observed outcomes.
Indeed, if one does not assume this tensor structure, there is always a possibility of a global unitary gauge applied to several copies of state, measurements, and quantum gates in a way that makes these objects entangled across the protocol repetitions.
This would not change the observed statistics, but would make it impossible to say anything about the models implemented in a single experimental run.
Note, however, that for the certification, we \emph{do not} need the assumption that quantum models in different repetitions of the protocol are identical.

We are ready to present our \cref{protocol} for classical certification of quantum gates.
We give a general formulation for a given set of gates $X$, with an important property that among all possible sequences $X^\ast$, there are such $\vec{x}$, for which we can predict the deterministic outcome $a_\vec{x}$, which a noiseless quantum computer should output.  

The basic idea is that if \cref{protocol} accepts a model for some large $N$, and therefore obtained the correct outcome $a_\vec{x}$ for different sequences $\vec{x}$ in all these tests, we can obtain a certain level of confidence, typically denoted by $1-\delta$, that our quantum computer implements the quantum model correctly.  
We define more precisely below what we mean by the latter, building on similar definitions in the self-testing literature~\cite{Miklin21UniversalScheme}.

\begin{algorithm}[H]
\caption{Classical certification of quantum gates}\label{protocol}
\begin{algorithmic}
\STATE Set $N$ -- the number of repetitions, $\mathcal{X}$ -- the set of gate sequences $\vec{x}$, each with the corresponding deterministic outcome $a_\vec{x}$, and $\mu$ -- a probability mass function over $\mathcal{X}$.
\FOR{$i\in[N]$}
\STATE sample $\vec{x}$ from $\mathcal{X}$, according to $\mu$;
\STATE run the quantum circuit consisting of the state preparation, the sequence $\vec{x}$ of gates, and the measurement, record the outcome $a$;
\IF{$a\neq a_{\vec{x}}$}
\STATE{output \vbt{``reject''} and end the protocol;}
\ENDIF
\ENDFOR
\STATE output \vbt{``accept''}.
\end{algorithmic}
\end{algorithm}

Since we do not assume any part of the quantum computer to be characterized, and rely only on the classical data in our certification, any two quantum models which are equivalent up to the symmetries in quantum mechanics~\cite{wigner1931gruppentheorie} will produce the same statistics, and we will not be able to distinguish between them. 
At the same time, we would like to exclude any other quantum model, which is formalized by the following definition.

\begin{definition}\label{def:correct_model}
    For a target $d$-dimensional quantum model $(\rho,\{U_x\}_{x\in X},\{M_a\}_{a\in A})$ with unitary channels, we say that another $d$-dimensional quantum model $(\tilde{\rho},\{\tilde{\Lambda}_x\}_{x\in X},\{\tilde{M}_a\}_{a\in A})$ is its correct implementation if there exists a unitary operator $U$ (with a possible complex conjugation $^{(\ast)}$), such that
    \begin{align}\label{eq:def_correct_model}
    \begin{split}
        \tilde{\rho} &= U\rho^{(\ast)} U^\dagger,\\
        \tilde{\Lambda}_x(\argdot) &= U U_x^{(\ast)} U^\dagger (\argdot) U {U_x^\dagger}^{(\ast)} U^\dagger,\quad \forall x\in X,\\
        \tilde{M}_a &= U M_a^{(\ast)} U^\dagger,\quad \forall a\in A.
    \end{split}
    \end{align}
\end{definition}
The negation of \cref{def:correct_model} provides a definition of an incorrect implementation.
Following the terminology of the self-testing literature~\cite{Supic2019SelfTesting}, we say that the target outcomes $a_{\vec{x}}$ \emph{self-test} a quantum model, if from the fact that the observed outcomes correspond to $a_{\vec{x}}$, we can infer that the implemented quantum model is a correct implementation of the target model, in the sense of \cref{def:correct_model}.
Moreover, we say that the self-test is \emph{robust}, if for small deviations in the outcomes, the target and the implemented models are close in some distance.
For quantum states and quantum measurements, we use the infidelity and the spectral distance, respectively, and for quantum gates, we use the average gate infidelity.
Here, we use the following expression for the average gate fidelity between a qubit channel $\tilde\Lambda$ and qubit unitary channel $\Lambda$ (see e.g.,~\cite{Kliesch2020TheoryOfQuantum}),
\begin{equation}
    \Fid(\tilde\Lambda,\Lambda) = \frac{2}{3}\Tr[\Choi(\tilde\Lambda)\Choi(\Lambda)]+\frac{1}{3},
\end{equation}
where $\Choi(\argdot)$ is the Choi-Jamio\l{}kowski~\cite{Cho75,Jam72} map, defined for a qubit channel $\Lambda$ as
\begin{equation}
\Choi(\Lambda) \coloneqq \frac{1}{2}\sum_{i,j\in\Set{0,1}}\Lambda(\kb{i}{j})\otimes\kb{i}{j}.
\end{equation}

\Cref{def:correct_model} is motivated by the natural symmetries in quantum mechanics~\cite{wigner1931gruppentheorie}: two quantum models connected by a (anti-) unitary transformation as in \cref{eq:def_correct_model} will always lead to the same observed statistics.
For this reason, whenever a collection of deterministic outcomes produced by a quantum computer self-test a target quantum model, we also say that the quantum computer implemented this model \emph{correctly}, even if the target and the implemented models are not exactly equal, but only equivalent up to a (anti-) unitary gauge.
For the same reason, we do not treat the (anti-) unitary gauge in the implemented model as \emph{noise}.

Following the terminology of the certification literature~\cite{Kliesch2020TheoryOfQuantum}, we say that \cref{protocol} is a certification test for a target quantum model with respect to appropriately chosen distances, if the protocol is \emph{complete} and \emph{sound}. 
\begin{definition}
    Given a null hypothesis $H_0$ and an alternative hypothesis $H_1$, a test is complete, if
    \begin{equation}\label{eq:def:complete}
        \PP[\text{\textnormal{\vbt{``accept''}}}\vert H_1] \geq 1-\delta',
    \end{equation}
    for $\delta'<\frac{1}{2}$ and sound, if 
    \begin{equation}\label{eq:def:sound}
        \PP[\text{\textnormal{\vbt{``reject''}}}\vert H_0] \geq 1-\delta,
    \end{equation}
    for $\delta<\frac{1}{2}$.
\end{definition}
It is common to take $\delta'=0$ in \cref{eq:def:complete}, which is also what we do in this work.
In our certification test, $H_1$ is the hypothesis that the implemented model is a correct implementation of a target model, as given by \cref{def:correct_model}.
For our certification results for a gate set universal for single-qubit quantum computation, we take $H_0$ to be the negation of $H_1$.
One can refer to this case as ``non-robust soundness'', as this would require the test to run infinitely. 
In the self-testing literature, this is often referred to as the \emph{ideal} case~\cite{Supic2019SelfTesting}.
For our results for a phase gate, we relax the hypothesis $H_0$, and only exclude models for which no unitary can bring them $\veps$-close in the chosen distance to the target model.  
This allows us to set an upper-bound on the required number of repetitions of \cref{protocol}, i.e., the sample complexity.

\section{Certification of single-qubit quantum models}
\label{sec:results}
In this section, first we prove that \cref{protocol} is an $\veps$-certification test for a quantum model $\Smodel$, where $\Sgate \coloneqq \ketbra{0}{0}+\i\ketbra{1}{1}$.
The soundness of this test follows from a more general robust self-testing-type result, which we prove first.
Here, we use $\Sgate$ gate to model a $\pi/2$-pulse with respect to an \ac{ONB} $\{\ket{+},\ket{-}\}$~\cite{leibfried2003quantum}.
However, all the following results for the quantum model $\Smodel $ also apply for other unitarily equivalent models such as e.g., $(\ketbra{0}{0},\{\sqrt{\Xgate},\sqrt{\Xgate}^\dagger\},\{\kb{0}{0},\kb{1}{1}\})$, where $\sqrt{\Xgate}\coloneqq \ketbra{+}{+}+\i\ketbra{-}{-}$.
Then, we prove an ideal self-testing result for a quantum model with an additional $\Hgate$ gate and the square root of the $\Sgate$ gate, i.e., the model $\Universalmodel$, which shows that a single-qubit universal gate set can be certified using \cref{protocol}. 

\subsection{Certification of a phase gate}
We explain in detail certification of the $\Sgate$ gate, or more precisely, the quantum model
\begin{equation}\label{eq:S_gate_target_model}
  \Smodel\, .
\end{equation}
We set $X=\Set{\iS,\iSd}$ for the classical instructions given to a quantum computer which specify whether it should implement the $\Sgate$ gate or its inverse, respectively.  

Surprisingly, it is sufficient to consider the following set of strings in \cref{protocol}
\begin{equation}\label{eq:X_S}
\mathcal{X} = \Set{\epsilon,\iS\iS,\iS\iSd,\iSd\iS,\iSd\iSd},
\end{equation}
where $\epsilon$ denotes the empty string. 
Next, we set $A = \Set{+,-}$, and for the sequences in~\cref{eq:X_S}, determine that the deterministic outcomes corresponding to the target model are the following
\begin{equation}\label{eq:outcomes_S}
a_\vec{x} = \left\{ 
  \begin{array}{ c l }
    + & \quad \textrm{for } \vec{x}\in \Set{\epsilon, \iS\iSd,\iSd\iS},\\
    - & \quad \textrm{for } \vec{x}\in \Set{\iS\iS,\iSd\iSd}.
  \end{array}
\right.
\end{equation}
It can be easily seen that in case of noiseless implementation of the state preparation $\kb{+}{+}$, the gates $\Sgate,\Sgate^\dagger$, and the measurement $\{\kb{+}{+},\kb{-}{-}\}$, \cref{protocol} always accepts.
In other words, \cref{protocol} is \emph{complete} for certification of the quantum model $\Smodel$, with $\mathcal{X}$ and $\Set{a_\vec{x}}_{\vec{x}\in\mathcal{X}}$ specified by \cref{eq:X_S} and \cref{eq:outcomes_S}, respectively. 

Proving the \emph{soundness} of the protocol is much less straightforward.
To achieve this, we first state and prove the following self-testing result.

\begin{theorem}\label{th:selftest_S}
Let $(\tilde{\rho},\{\tilde{\Lambda}_\iS,\tilde{\Lambda}_\iSd\},\{\tilde{M}_+,\tilde{M}_-\})$ be a single-qubit quantum model that passes with probability at least $1-\veps$ a single repetition of \cref{protocol} with uniform sampling from the gate sequences \eqref{eq:X_S} and for deterministic measurement outcomes \eqref{eq:outcomes_S}. 
Then the quantum model is $\LandauO(\veps)$-close to the target model \eqref{eq:S_gate_target_model}, i.e.,
there is a unitary $U\in\U(2)$ such that
\begin{align}\label{eq:th_selftest_S}
\begin{split}
\Fid(\tilde{\Lambda}_\iS, U\Sgate U^\dagger) &\geq 1-\LandauO(\veps),\\
\Fid(\tilde{\Lambda}_\iSd, U\Sgate^\dagger U^\dagger) &\geq 1-\LandauO(\veps)\,,\\
\Tr[\tilde{\rho}U\kb{+}{+}U^\dagger] &\geq 1-\frac{15}{2}\veps,\\
\norm*{\tilde{M}_{+}-U\kb{+}{+}U^\dagger}_\infty &\leq \frac{5}{2}\veps\, .
\end{split}
\end{align}
\end{theorem}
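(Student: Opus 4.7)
My plan is to proceed in three stages, with the main technical work concentrated in the last one.

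First, since \cref{protocol} samples uniformly from the five sequences in $\mathcal{X}$ and rejects with probability at most $\veps$, a union bound yields that each individual sequence succeeds with probability at least $1 - 5\veps$. This converts the scalar hypothesis into five concrete Born-rule inequalities: the empty-string test gives $\Tr[\tilde M_+\tilde\rho] \geq 1 - 5\veps$, while the four non-trivial sequences encode constraints on $\tilde\Lambda_\iS^2(\tilde\rho)$, $\tilde\Lambda_\iSd^2(\tilde\rho)$, $\tilde\Lambda_\iS\tilde\Lambda_\iSd(\tilde\rho)$ and $\tilde\Lambda_\iSd\tilde\Lambda_\iS(\tilde\rho)$.

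Second, I would handle state and measurement. From the empty-string inequality and $0 \leq \tilde M_+ \leq \id$ on a qubit, the leading eigenvalue of $\tilde M_+$ must exceed $1 - 5\veps$, so $\tilde M_+$ is $\LandauO(\veps)$-close to a rank-one projector. The gauge unitary $U \in \U(2)$ is naturally defined by setting $U\kb{+}{+}U^\dagger$ equal to this projector. Writing the general form of a qubit POVM element as $\tilde M_+ = \tfrac{1-\alpha}{2}\id + \alpha U\kb{+}{+}U^\dagger$, an elementary two-dimensional perturbation bound yields $\norm{\tilde M_+ - U\kb{+}{+}U^\dagger}_\infty \leq \tfrac{5}{2}\veps$, and substituting this bound into the empty-string inequality produces the fidelity bound $\Tr[\tilde\rho\,U\kb{+}{+}U^\dagger] \geq 1 - \tfrac{15}{2}\veps$.

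The channel bounds form the main technical obstacle. Absorbing $U$ into the gauge, the four remaining inequalities say that $\tilde\Lambda_\iS^2$ and $\tilde\Lambda_\iSd^2$ approximately send $\kb{+}{+}$ to $\kb{-}{-}$, while $\tilde\Lambda_\iS\tilde\Lambda_\iSd$ and $\tilde\Lambda_\iSd\tilde\Lambda_\iS$ approximately preserve $\kb{+}{+}$. Setting $\sigma_\iS \coloneqq \tilde\Lambda_\iS(\tilde\rho)$ and $\sigma_\iSd \coloneqq \tilde\Lambda_\iSd(\tilde\rho)$, I would first deduce that both intermediate states are approximately pure -- since $\tilde\Lambda_\iS$ must subsequently map them to the near-pure outputs $\kb{-}{-}$ and $\kb{+}{+}$ -- and mutually near-orthogonal, by trace-distance contraction of the CPTP map $\tilde\Lambda_\iS$ applied to these two inputs. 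Consequently, the three states $\tilde\rho, \sigma_\iS, \sigma_\iSd$ have Bloch vectors lying in a common plane and span that plane, which fixes the action of $\tilde\Lambda_\iS$ (viewed as a linear map on Bloch vectors) on the whole plane up to $\LandauO(\veps)$.

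The hardest step is to upgrade this in-plane control into a bound on the full Pauli transfer matrix, since the protocol never probes the axis orthogonal to this plane. A priori, $\tilde\Lambda_\iS$ could act as a $\pi/2$-rotation in-plane while contracting or inverting the orthogonal direction. The saving observation is that complete positivity pins the orthogonal-axis action: once two rows of the Pauli transfer matrix agree with a genuine rotation to within $\LandauO(\veps)$, the Choi matrix develops a negative eigenvalue of order $\LandauO(\veps)$ unless the third row also does, so the channel is forced to be close to a unitary $\pi/2$-rotation. Translating this Pauli-transfer-matrix closeness into the average gate fidelity via $\Fid(\tilde\Lambda,\Lambda) = \tfrac{2}{3}\Tr[\Choi(\tilde\Lambda)\Choi(\Lambda)] + \tfrac{1}{3}$ then yields $\Fid(\tilde\Lambda_\iS, U\Sgate U^\dagger) \geq 1 - \LandauO(\veps)$; the bound for $\tilde\Lambda_\iSd$ follows symmetrically. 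The residual chirality ambiguity -- whether $\sigma_\iS$ ends up near $\ket{y+}$ or $\ket{y-}$ -- and the one-parameter freedom in choosing the rotation axis within the plane orthogonal to the state direction are both absorbed into the final choice of $U$.
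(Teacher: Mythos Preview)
Two steps in your sketch do not go through as written. First, the POVM argument: from $\Tr[\tilde M_+\tilde\rho]\geq 1-5\veps$ alone you can only conclude that the \emph{largest} eigenvalue of $\tilde M_+$ exceeds $1-5\veps$; nothing rules out $\tilde M_+=\id$. (Your parametrisation $\tilde M_+=\tfrac{1-\alpha}{2}\id+\alpha\, U\kb{+}{+}U^\dagger$ already presupposes $\Tr\tilde M_+=1$.) To bound the smaller eigenvalue you must use the sequences $\iS\iS$ and $\iSd\iSd$ with outcome~$-$. The paper combines the four non-trivial sequences into a single inequality whose left-hand side carries the factor $1-\lambda_+-\lambda_-$ (the eigenvalue gap of $\tilde M_+$), which is what forces $\lambda_++\lambda_-\leq\tfrac{5}{2}\veps$; only then do the constants $\tfrac{5}{2}\veps$ and $\tfrac{15}{2}\veps$ follow.

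Second, and this is the real difficulty, the asserted $\LandauO(\veps)$ control on the Pauli transfer matrix is unjustified and is precisely the crux of the whole argument. The protocol measures only along the $\hat x$-axis, so each Born-rule inequality bounds the $\hat x$-component of an output Bloch vector to $\LandauO(\veps)$; the transverse components are constrained only through $|\vec r|\leq 1$, hence to $\LandauO(\sqrt\veps)$. Your contractivity arguments for the near-orthogonality of $\sigma_\iS,\sigma_\iSd$ and for $\vec v\perp\hat x$ likewise yield only $\LandauO(\sqrt\veps)$, so a direct propagation gives $\Fid\geq 1-\LandauO(\sqrt\veps)$, not $1-\LandauO(\veps)$. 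The paper works instead in the Heisenberg/Choi picture: it writes $\Choi(\tilde\Lambda_\iS)$ in a product basis built from the eigenprojectors $\psi,\psip$ of $\tilde M_+$ and the eigenprojectors $\phi,\phip$ of $\tilde\Lambda_\iS^\dagger(\psi)-\tilde\Lambda_\iSd^\dagger(\psi)$, applies a Horn-type block inequality $|\bra vB\ket w|^2\leq\bra vA\ket v\,\bra wC\ket w$ to sharpen certain off-diagonal entries from $\LandauO(\sqrt\veps)$ to $\sqrt{\veps(1-2|a_3|)}$, and then shows that the residual $\LandauO(\sqrt\veps)$ quantities (the deviation of $|\bk\psi\phi|^2$ from $\tfrac12$ and of $\tilde\rho$ from $\psi$) enter the entanglement fidelity only quadratically. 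Your Bloch-sphere route would need an equivalent anisotropic error analysis to recover the linear scaling; the CP-constraint observation for the third axis is correct but by itself does not supply it.
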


In the case of unitary channels, we use the respective operators as the argument of the fidelity function for simplicity of notation.
Below, we give a sketch of the proof, and the full proof can be found in \cref{app:th_selftest_S}.
\begin{proofsketch}
The conclusions of the theorem follow from the condition $\Pass\geq 1-\veps$ and the dimension assumption. 
As a first step, we show that for small $\veps$, the measurement effects $\tilde{M}_+$ and $\tilde{M}_-$ are close to being rank-1 projectors, which we denote as $\kb{\psi}{\psi}$ and $\kb{\psip}{\psip}$.
Next, we show that the \acp{POVM} which one obtains by applying the adjoint maps $\tilde{\Lambda}_\iS^\dagger$ and $\tilde{\Lambda}_\iSd^\dagger$ to $\kb{\psi}{\psi}$ and $\kb{\psip}{\psip}$ are also close to be projective for small $\veps$.
We denote the corresponding projectors by $\kb{\phi}{\phi}$ and $\kb{\phip}{\phip}$.
Importantly, we find that $\tilde{\Lambda}_\iS^\dagger(\kb{\psi}{\psi})\approx\kb{\phi}{\phi}$ and $\tilde{\Lambda}_\iSd^\dagger(\kb{\psi}{\psi})\approx\kb{\phip}{\phip}$, and since the adjoint maps of channels are unital, also $\tilde{\Lambda}_\iS^\dagger(\kb{\psip}{\psip})\approx\kb{\phip}{\phip}$ and $\tilde{\Lambda}_\iSd^\dagger(\kb{\psip}{\psip})\approx\kb{\phi}{\phi}$.  
Next, we obtain a partial characterization of the Choi-Jamio\l{}kowski state of the channel $\tilde{\Lambda}_\iS$ in the bases of $\ket{\psi},\ket{\psip}$ and $\ket{\phi},\ket{\phip}$, with the leading terms corresponding to the subspace spanned by $\ket\psi\ket\phi$ and $\ket\psip\ket\phip$.
Then, we show that $\tilde{\Lambda}_\iS$ is close to being a unitary channel, for which we use the conditions $\tilde{\rho}\approx \kb{\psi}{\psi}$ and $\tilde{\Lambda}_\iS(\tilde{\rho})\approx\kb{\phip}{\phip}$. 
Finally, by fixing the global phases of the vectors $\ket{\psi},\ket{\psi^\perp}$ appropriately, we construct a suitable gauge unitary,
\begin{equation}\label{eq:proof_S_U}
U = \kb{\psi}{+}-\i\kb{\psip}{-},
\end{equation}
for which the condition for $\tilde{\Lambda}_\iS$ in \cref{eq:th_selftest_S} is satisfied.
Because we obtain characterization of $\tilde{\Lambda}_\iS$ and $\tilde{\Lambda}_\iSd$ in the same basis, the proof also easily extends to the channel $\tilde{\Lambda}_\iSd$.
The bounds for the state and the measurement in \cref{eq:th_selftest_S} with the chosen unitary are also immediate.
\end{proofsketch}

Interestingly, in the ideal case of $\veps=0$, the self-testing argument can also be made for the set of gate sequences without $\iSd\iSd$, or without $\iS\iS$.
Moreover, the effect that the sampling distribution $\mu$ over $\mathcal X$ plays in \cref{th:selftest_S} is purely in determining the constants in front of $\veps$, with the only requirement that each sequence in $\mathcal{X}$ is chosen with some nonzero probability.

We can use the known relation that connects the average gate fidelity and the diamond distance for an arbitrary qubit channel $\tilde\Lambda$ and a unitary channel $\Lambda$~\cite{Kliesch2020TheoryOfQuantum},
\begin{equation}\label{eq:fidelity_and_diamond}
\norm*{\tilde\Lambda-\Lambda}_\diamond \leq 2\sqrt{6}\sqrt{1-\Fid(\tilde\Lambda,\Lambda)},
\end{equation}
to reformulate \cref{th:selftest_S} for the diamond distance with an upper bound of $\LandauO(\sqrt{\veps})$.

Next, using \cref{th:selftest_S}, we show that \cref{protocol} is sound for certification of the model $\Smodel$, as stated by the following corollary.
This corollary is the main practical result of the paper.

\begin{corollary}
\label{cor:sample_S}
\cref{protocol} with uniform sampling from the gate sequences \eqref{eq:X_S} and for deterministic measurement outcomes \eqref{eq:outcomes_S} is an $\veps$-certification test for the $\Sgate$ gate and its inverse with respect to the average gate infidelity, as well as initial state $\kb{+}{+}$ and measurement $\{\kb{+}{+},\kb{-}{-}\}$ with respect to infidelity and spectral norm from $N$ independent samples for $N\geq N_0$ with
\begin{equation}\label{eq:sample_coml_S}
N_0 = \LandauO(\veps^{-1})\ln(\delta^{-1})
\end{equation} 
with confidence at least $1-\delta$.
Moreover, \cref{protocol} accepts the target model $\Smodel$ with probability $1$.
\end{corollary}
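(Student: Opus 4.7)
The plan is to establish the two standard ingredients of an $\veps$-certification test---\emph{completeness} (the target model is accepted with probability $1$) and \emph{soundness} (any model that fails to be $\veps$-close to the target is rejected with probability at least $1-\delta$ whenever $N\geq N_0$)---and then read off the sample complexity.

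Completeness is immediate by direct calculation. For the target model in \eqref{eq:S_gate_target_model}, each sequence in \eqref{eq:X_S} produces its designated outcome in \eqref{eq:outcomes_S} with probability $1$: $\Sgate\Sgate=\Zgate$ sends $\ket{+}\mapsto\ket{-}$, giving outcome $-$; the sequences $\Sgate\Sgate^\dagger$ and $\Sgate^\dagger\Sgate$ act as the identity and return outcome $+$; and $\Sgate^\dagger\Sgate^\dagger=\Zgate$ likewise gives outcome $-$. Consequently no round of \cref{protocol} ever rejects, so the protocol outputs \vbt{``accept''} with probability $1$ regardless of $N$.

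For soundness, the idea is to apply \cref{th:selftest_S} contrapositively and then upgrade the single-round deviation into an exponentially decaying acceptance probability via the independence assumption. Let $p$ denote the probability that a single repetition of \cref{protocol} passes under the implemented model and the uniform law on $\mathcal{X}$. \cref{th:selftest_S} asserts that $p\geq 1-\veps$ implies the four $\LandauO(\veps)$-closeness bounds in \eqref{eq:th_selftest_S} for a suitable $U\in\U(2)$, the largest explicit constant being $\tfrac{15}{2}$ in the state bound. Reading this contrapositively, there exists a universal $c>0$ such that whenever, after optimization over the gauge $U$, the implemented model fails any one of the four $\veps$-closeness conditions, one has $p\leq 1-c\veps$.

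By the independence-of-repetitions assumption (iii), the event ``\cref{protocol} accepts'' is the intersection of $N$ independent single-round pass events and therefore has probability at most $\prod_{i=1}^{N}p_i\leq(1-c\veps)^{N}\leq \e^{-cN\veps}$. Requiring this to be at most $\delta$ gives $N\geq (c\veps)^{-1}\ln(\delta^{-1})$, which is exactly the announced $N_0=\LandauO(\veps^{-1})\ln(\delta^{-1})$. The only technical care needed is in consolidating the four distinct constants from \cref{th:selftest_S} into a single worst-case $c$, and in verifying that (iii) allows the per-round bound $p_i\leq 1-c\veps$ to be invoked in each round even when the implemented models differ across repetitions---this is the main, albeit mild, obstacle, since the product estimate only needs a uniform single-round upper bound which the contrapositive of \cref{th:selftest_S} supplies round-by-round.
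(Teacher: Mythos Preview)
Your proposal is correct and follows essentially the same approach as the paper: both argue completeness by direct computation on the target model, obtain the single-round bound $p\leq 1-c\veps$ from the contrapositive of \cref{th:selftest_S}, multiply over $N$ independent rounds, and solve $(1-c\veps)^N\leq\delta$ for $N$. The only cosmetic differences are that you use the inequality $(1-c\veps)^N\leq\e^{-cN\veps}$ where the paper inverts $\ln\tfrac{1}{1-\veps}$ and then Taylor-expands, and that you remark explicitly on the non-identical-repetitions case, which the paper's proof leaves implicit.
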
  

The lower bound of $N_0$ in \cref{cor:sample_S} should be interpreted as a sufficient number of repetitions of \cref{protocol} to reach the target confidence level $1-\delta$.

\begin{proof}
We can invert the statement of \cref{th:selftest_S}, and obtain that for a noisy quantum model, for which there does not exist a unitary $U\in\U(2)$, satisfying \cref{eq:th_selftest_S}, the probability of passing a single repetition of \cref{protocol} is $\Pass\leq 1-\veps$.
If we take $N$ independent copies of such noisy $\tilde{\Lambda}_\iS$, $\tilde{\Lambda}_\iSd$, as well as $\tilde{\rho}$ and $\{\tilde{M}_+,\tilde{M}_-\}$, the probability of \cref{protocol} accepting them is 
\begin{equation}\label{eq:prob_accept_S}
\Pr[\text{\textnormal{\vbt{``accept''}}}] \leq (1-\veps)^N.
\end{equation} 
For the target confidence level $1-\delta$, the acceptance probability in \cref{eq:prob_accept_S} should be upper-bounded by $\delta$, which leads to a lower bound on $N$,
\begin{equation}
N\geq \frac{\ln(\delta^{-1})}{\ln\frac{1}{1-\veps}}.
\end{equation} 
Approximating the logarithm function for $\frac{1}{1-\veps}$, and rescaling $\veps$ such that the lower bounds on the average gate fidelity in \cref{eq:th_selftest_S} are exactly $1-\veps$, leads to the sample complexity stated in \cref{eq:sample_coml_S}. 
\end{proof}

As \cref{cor:sample_S} demonstrates, our method for certification of quantum gates is as efficient as the direct certification of quantum processes~\cite{Liu2020EfficientVerificationOf}, which requires trust in state preparations and measurements (and hence is not free from \ac{SPAM} errors) and also requires an auxiliary system to prepare the Choi-Jamio\l{}kowski state of the process.
The only price to pay is a possibly larger constant factor, which we numerically estimate for the uniform $\mu$ in the next section.

The commonly used \ac{SPAM}-robust characterization methods, \ac{RB} and \ac{GST}, are sample efficient, however, as mentioned earlier, if used for certification, they do not come with the soundness guarantees. 
The proposal of Ref.~\cite{magniez2006self} for self-testing of quantum gates in the Bell test is sound, but
can tolerate very little amount of noise (see discussion in Ref.~\cite{sekatski2018certifying}).
The proposal of Ref.~\cite{sekatski2018certifying} reports higher noise tolerance, but as noted in Ref.~\cite{govcanin2022sample}, self-tests based on violations of a Bell inequality that do not reach the algebraic maximum suffer from a quadratically worse scaling of sample complexity with respect to $\veps$.

Finally, since \cref{protocol} uses the same experimental setup as \ac{GST} and \ac{RB}, it can be seamlessly integrated into these protocols.
In particular, if $\Sgate$ and $\Sgate^\dagger$ are included in the gate set of a \ac{GST} experiment, the statistics gathered there can be used to obtain a lower bound on the average gate fidelity from \cref{th:selftest_S} (under an additional i.i.d.~assumption).  
Moreover, the gauge freedom in the \ac{GST} output can be reduced to unitary for the gates $\Sgate$ and $\Sgate^\dagger$. 
Similarly, if the acceptance probability of \cref{protocol} is estimated before conducting an \ac{RB} experiment, it can be used in the guarantees of the \ac{RB} protocol's output.

\subsection{Numerical investigations}

In this subsection, we supplement our theoretical results of \cref{th:selftest_S} and \cref{cor:sample_S} by numerical investigations.
This also allows us to estimate the coefficients of the linear scaling in \cref{th:selftest_S}, which then translates to an explicit formula for the sample complexity in \cref{cor:sample_S}.
The results of our numerical investigations are shown in \cref{fig:numerics}.

For each randomly generated quantum model $(\tilde{\rho},\{\tilde{\Lambda}_\iS,\tilde{\Lambda}_\iSd\},\{\tilde{M}_+,\tilde{M}_-\})$, we calculate the probability of it failing a single repetition of the protocol, which corresponds to $\veps$ in the statement of \cref{th:selftest_S}.
We then apply a unitary to the target model, which explicitly depends on the noisy random model, as described in the proof of \cref{th:selftest_S}, and calculate the distance between the noisy model and the target model. 
As the latter, we take the maximum of the two average gate infidelities for the $\Sgate$ and $\Sgate^\dagger$ gates.

We consider four different noise models: unitary noise, two combinations of unitary noise with depolarizing noise, and the depolarizing noise.
To generate random noisy quantum models, we apply independent unitaries to the target state, channels, and the measurement, i.e., we take $\tilde{\rho} = U_1\kb{+}{+}U_1^\dagger$, $\tilde{\Lambda}_\iS(\argdot) = U_2\Sgate U_3(\argdot)U^\dagger_3\Sgate^\dagger U^\dagger_2$, $\tilde{\Lambda}_\iSd(\argdot) = U_4\Sgate^\dagger U_5(\argdot)U_5^\dagger\Sgate U^\dagger_4$, $\tilde{M}_+ = U_6\kb{+}{+}U^\dagger_6$, and $\tilde{M}_- = \1-\tilde{M}_+$, for randomly sampled $U_1 \dots U_6$.
Sampling Haar-random unitaries $U_1 \dots U_6$ would result in a quantum model far from the target one, which would not be useful for our numerical investigation.
Therefore, we generate each $U_i$, $i\in \{1,\dots,6\}$, by randomly sampling $u_i\in \su(2)$ with $\norm{u_i}_\infty=1$ and then setting $U_i = \e^{\alpha_i u_i}$ for uniformly sampled $\alpha_i\in[0,1]$.

In addition to the unitary noise, we consider adding various amounts of depolarizing noise to the channels $\tilde{\Lambda}_\iS$ and $\tilde{\Lambda}_\iSd$.
Finally, we consider a noise model in which the state preparation, measurement, and the $\Sgate$-gate are implemented ideally, while the $\Sgate^\dagger$-gate suffers from the depolarizing noise, $\tilde{\Lambda}_\iSd(\rho) = (1-\eta)\Sgate^\dagger(\rho)\Sgate + \eta\Tr[\rho]\frac{\1}{2}$, with the depolarizing parameter $\eta\in [0,1]$. 
A straightforward calculation shows that for this model $1-\Pass = \frac{(4-\eta)\eta}{10}$, and $1-\Fid(\tilde{\Lambda}_\iSd, \Sgate^\dagger) = \frac{\eta}{2}$.

\begin{figure}[t!]
    \centering
\includegraphics[width=\linewidth]{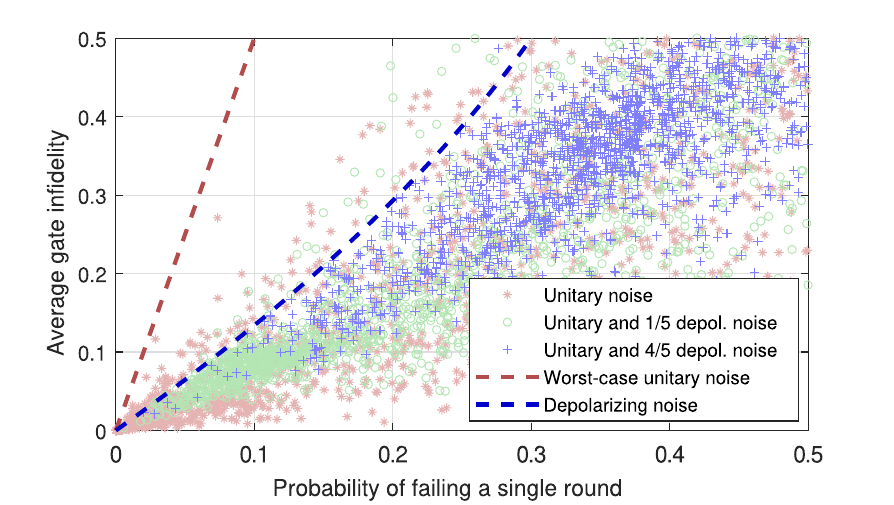}\\
\includegraphics[width=\linewidth]{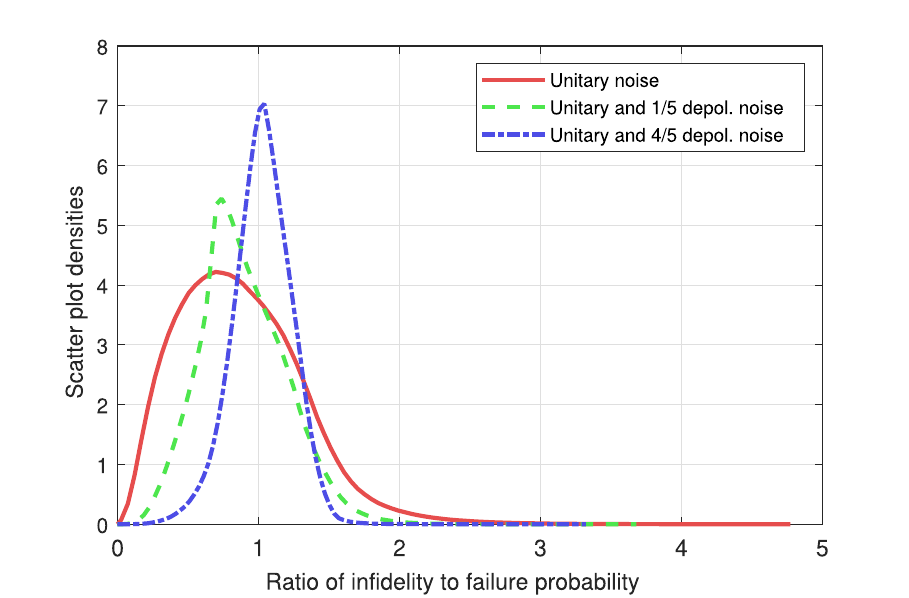}
\caption{Results of our numerical investigations. Top: dependence of the average gate infidelity on the probability of failing a single round of \cref{protocol} for noisy models, as in the statement of \cref{th:selftest_S}. Each of the approximately $10^3$ depicted points represents a single noisy model, with different noise models considered. 
The straight dotted line corresponds to the worst-case scaling among the randomly sampled noisy models, and the dotted curve corresponds to the depolarizing noise. Bottom: a collection of histograms representing distributions of the ratios of the average gate infidelity to the failure probability for the three considered cases of noise models. 
Each histogram is based on $10^7$ random samples. }
\label{fig:numerics}
\end{figure}

In \cref{fig:numerics} (top) we plot approximately $10^3$ randomly generated noisy quantum models, as well as a linear worst-case upper bound on the distance given the failing probability, which we estimated from sampling $10^7$ random models.
In addition to the unitary noise, we consider adding depolarizing noise with the depolarizing parameters $\frac{1}{5}$ and $\frac{4}{5}$, as well as the purely depolarizing noise added to the $\Sgate^\dagger$, as described above.
In \cref{fig:numerics} (bottom) we present histograms for the considered noise models, apart from the purely depolarizing noise. 
As one can see from the histograms, the unitary noise results in better scaling on average, but it is also the one which produces the worst-case behavior.

The performed numerical investigation follows the theoretical predictions of the linear scaling of the distances as a function of $\veps$ in \cref{th:selftest_S}, and allows us to estimate the coefficient of this linear function to be approximately $5$.
This results in the sample complexity of $\frac{5}{\veps}\ln(\delta^{-1})$, which amounts to approximately $2000$ repetitions for $\veps = \delta = 0.01$, or approximately $300$ repetitions for $\veps = \delta = 0.05$.

\subsection{Certification of a gate set universal for single-qubit quantum computation}
Next, we show how to employ \cref{protocol} to certify a universal gate set for single-qubit quantum computation. 
For this, we rely on already proven \cref{th:selftest_S} for self-testing of the $\Sgate$ gate, and incorporate the Hadamard gate $\Hgate$ and the $\Tgate$ gate to the sequences considered in the protocol.  
There are, however, important differences from the case of the $\Sgate$ gate certification.
First, in order to include the Hadamard we also need to account for a possible complex conjugation, which is still in accordance with \cref{def:correct_model}.
To certify the $\Tgate$ gate, we would either need to modify \cref{protocol} to include estimation of outcome probabilities, or as we do it here, change the goal of the certification. 
In particular, we show that using \cref{protocol}, we can certify implementation of the square root of the $\Sgate$ gate, which can be either $\Tgate = \kb{0}{0}+\e^{\i\frac{\pi}{4}}\kb{1}{1}$, or $\Zgate\Tgate = \kb{0}{0}-\e^{\i\frac{\pi}{4}}\kb{1}{1}$.
Simultaneously, either of the two gates, $\Tgate$ or $\Zgate\Tgate$,  in conjunction with the $\Sgate$ gate and the Hadamard gate, constitute a universal gate set for single-qubit quantum computation.

We use the following gate sequences for certification of the quantum model $\Universalmodel$, 
\begin{align}\label{eq:X_uni}
\begin{split}
    \mathcal{X}=\{&\epsilon,\iS\iSd,\iSd\iS,\iS\iS,\iSd\iSd,\iS\iH\iS,\iSd\iH\iS,\iH\iH,\iH\iS\iH,\\
     &\iH\iT\iH,\iS\iS\iH\iT\iH,\iT\iT\iS\},
\end{split}
\end{align}
where the labels $X = \Set{\iS,\iSd,\iH,\iT}$ correspond to the gates in the self-explanatory way.
Recall, that we read the sequences from left to right, e.g., in the sequence $\iSd\iH\iS$, the gate corresponding to $\iSd$ is applied first.
We again take $A = \Set{+,-}$, and set the deterministic outcomes expected by \cref{protocol} to 
\begin{equation}\label{eq:outcomes_uni}
a_\vec{x} = \left\{ 
  \begin{array}{ c l }
    + & \textrm{for } \vec{x}\in \Set{\epsilon,\iS\iSd,\iSd\iS,\iS\iH\iS,\iH\iH,\iH\iS\iH,\iH\iT\iH},\\
    - & \textrm{for } \vec{x}\in \Set{\iS\iS,\iSd\iSd,\iSd\iH\iS,\iS\iS\iH\iT\iH,\iT\iT\iS}.
  \end{array}
\right.
\end{equation}
We formalize our findings in this direction in the following theorem, which is an ideal self-testing type result.

\begin{theorem}\label{th:universal}
If a single-qubit quantum model $(\tilde{\rho},\{\tilde{\Lambda}_x\}_{x\in X},\{\tilde{M}_+,\tilde{M}_-\})$, with $X = \Set{\iS,\iSd,\iH,\iT}$ passes a single repetition of \cref{protocol} with probability $1$, for $\mathcal{X}$ and $a_\vec{x}$ specified in \cref{eq:X_uni} and \cref{eq:outcomes_uni}, respectively, and for any sampling distribution such that $\mu(\vec{x})>0$ for all $\vec{x}\in\mathcal{X}$, then
each quantum channel $\tilde{\Lambda}_x$ in the model is unitary, i.e., there exist $\tilde{U}_x\in\U(2)$, such that
\begin{equation}
    \tilde{\Lambda}_x(\argdot) = \tilde{U}_x(\argdot)\tilde{U}_x^\dagger,\quad \forall x\in X,
\end{equation}
and there exists a unitary $U\in\U(2)$ (with possible complex conjugation $^{(\ast)}$), such that
\begin{align}\label{eq:th_universal}
\begin{split}
    \tilde{U}_\iS &= U \Sgate^{(\ast)} U^\dagger,\\
    \tilde{U}_\iSd &= U {\Sgate^\dagger}^{(\ast)} U^\dagger,\\
    \tilde{U}_\iH &= U \Hgate U^\dagger,
\end{split}
\end{align}
and either $\tilde{U}_\iT = U \Tgate^{(\ast)} U^\dagger$ or $\tilde{U}_\iT = U \Zgate\Tgate^{(\ast)} U^\dagger$.
Moreover, for the same unitary $U$ it holds that,
\begin{align}\label{eq:th_uni_state_meas}
\begin{split}
    \tilde{\rho} &= U\kb{+}{+} U^\dagger,\quad \tilde{M}_+ = U \kb{+}{+} U^\dagger.\\
\end{split}
\end{align}
\end{theorem}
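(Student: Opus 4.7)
The strategy is to bootstrap from \cref{th:selftest_S}. Because the hypothesis of \cref{th:universal} demands probability-one acceptance, the sub-protocol restricted to the five sequences $\{\epsilon,\iS\iSd,\iSd\iS,\iS\iS,\iSd\iSd\}\subset\mathcal{X}$ already satisfies the assumptions of \cref{th:selftest_S} at $\veps=0$. That result yields a gauge unitary $U\in\U(2)$ (possibly composed with complex conjugation, cf.~\cref{def:correct_model}) under which $\tilde\rho,\tilde M_\pm,\tilde\Lambda_\iS,\tilde\Lambda_\iSd$ coincide exactly with their target counterparts. After absorbing this gauge the $S$-sector is pinned down and the remaining task is to characterise $\tilde\Lambda_\iH$ and $\tilde\Lambda_\iT$. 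The structural observation used throughout is that a probability-one outcome is equivalent to a pure-state identity at the end of the circuit: a rank-one effect saturating Born's rule forces the output state to equal the corresponding projector.

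Next I would analyse the $\iH$-sector. The outcomes for $\iS\iH\iS$ and $\iSd\iH\iS$ give $\tilde\Lambda_\iH(\kb{R}{R})=\kb{L}{L}$, where $\ket{R}\coloneqq\Sgate\ket{+}$ and $\ket{L}\coloneqq\Sgate^\dagger\ket{+}$. Setting $\sigma\coloneqq\tilde\Lambda_\iH(\kb{+}{+})$, the outcome for $\iH\iH$ enforces $\tilde\Lambda_\iH(\sigma)=\kb{+}{+}$; because a pure state admits only trivial convex decompositions, $\sigma$ must itself be pure, $\sigma=\kb{\phi}{\phi}$. The outcome for $\iH\iS\iH$ then yields $\tilde\Lambda_\iH(\Sgate\kb{\phi}{\phi}\Sgate^\dagger)=\kb{+}{+}$. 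If $\ket{\phi}$ and $\Sgate\ket{\phi}$ were linearly independent, a Kraus-rank argument (any channel mapping two linearly independent pure states to a common pure state is the constant replacement channel onto that state) would force $\tilde\Lambda_\iH(\argdot)=\kb{+}{+}$, contradicting $\tilde\Lambda_\iH(\kb{R}{R})=\kb{L}{L}$; hence $\ket{\phi}$ is an eigenvector of $\Sgate$, i.e., $\ket{\phi}\in\{\ket{0},\ket{1}\}$. Parameterising all CPTP maps realising the pinned action on these pure states via the Choi state then shows $\tilde\Lambda_\iH$ is unitary, and yields two solutions: $\Hgate(\argdot)\Hgate$ and $V(\argdot)V^\dagger$ with $V=\tfrac{1}{\sqrt{2}}(\Xgate-\Zgate)=\Xgate\Hgate\Xgate$.

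For the $\iT$-sector I would combine the now-identified $\tilde\Lambda_\iH$ with $\iH\iT\iH\to+$ and $\iS\iS\iH\iT\iH\to-$ to obtain $\tilde\Lambda_\iT(\kb{0}{0})=\kb{0}{0}$ and $\tilde\Lambda_\iT(\kb{1}{1})=\kb{1}{1}$. Choi positivity of the most general such CPTP map reduces it to phase-damping form $\tilde\Lambda_\iT(\kb{0}{1})=b\kb{0}{1}$ for a single complex coefficient with $|b|\leq 1$, and $\iT\iT\iS\to-$ imposes $\tilde\Lambda_\iT^2(\kb{+}{+})=\kb{R}{R}$, i.e., $b^2=-\i$, whence $|b|=1$ and $\tilde\Lambda_\iT$ is unitary---either $\Tgate(\argdot)\Tgate^\dagger$ or $\Zgate\Tgate(\argdot)(\Zgate\Tgate)^\dagger$. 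I expect the hardest part to be the $\iH$-step: the listed sequences do not by themselves discriminate $\Hgate$ from $V$, so one must verify that the $V$-branch is precisely the image of $\Hgate$ under the alternative gauge $U=\Xgate$ combined with complex conjugation---also compatible with \cref{th:selftest_S}, since $\Xgate\Sgate^\dagger\Xgate$ and $\Sgate$ differ only by a global phase and hence induce the same channel---so that both branches fit the conclusion of \cref{th:universal}.
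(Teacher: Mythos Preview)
Your proposal is correct and follows essentially the same route as the paper's proof: bootstrap from \cref{th:selftest_S} at $\veps=0$, use $\iS\iH\iS$ and $\iSd\iH\iS$ to pin $\tilde\Lambda_\iH$ on an \ac{ONB}, use $\iH\iH$ for purity (the paper's \cref{lemma:double_channel_purity}), resolve the two Hadamard branches $\Hgate$ versus $\Xgate\Hgate\Xgate$ by absorbing $\Xgate$ into the gauge together with complex conjugation, and then treat $\iT$ in the same way. The only local difference is how $\iH\iS\iH$ is exploited: you combine it with $\iH\iH$ and a constant-channel (Kraus-rank) argument to force $\ket\phi\in\{\ket0,\ket1\}$ \emph{before} proving unitarity, whereas the paper first establishes unitarity via \cref{app:lemma_unitarity} and then solves $\abs{1+2\e^{\i\theta}-\e^{2\i\theta}}=2\sqrt2$ for the residual phase---both routes land on the same pair of solutions.
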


In the statement of \cref{th:universal}, we define the complex conjugation with respect to the computational basis.
\begin{proof}
We start with a brief overview of the proof steps. 
The starting point is to consider the sequences $\Set{\epsilon,\iS\iSd,\iSd\iS,\iS\iS,\iSd\iSd}$, which achieve the self-testing result for the $\Sgate$, and  $\Sgate^\dagger$ gates. 
By additionally considering all the sequences composed of $\Set{\iH,\iS,\iSd}$, we can self-test the Hadamard gate. 
Lastly, we consider sequences involving $\iT$ and prove that either the $\Tgate$ gate or $\Zgate\Tgate$ is implemented when we pass this instruction label to the quantum computer. 
    
The first step follows immediately from the proof of Theorem \ref{th:selftest_S} for the special case $\veps=0$, which provides us with a unitary $U$ such that the first two equations in \cref{eq:th_universal} as well as \cref{eq:th_uni_state_meas} are satisfied. 
Note, that at this point we do not need additional complex conjugation.
Let $\kb{\psi}{\psi}=\tilde{\rho}$ be the initial state and $\kb{\phi}{\phi}=\tilde{\Lambda}_\iSd(\kb{\psi}{\psi})$. 
Following \cref{th:selftest_S}, we know that $\tilde{\Lambda}_\iS$ and $\tilde{\Lambda}_\iSd$ implement a unitary $\tilde{U}_\iS$ and its inverse, respectively, where 
\begin{equation}\label{eq:proof_ideal_S_2}
     \tilde{U}_\iS=\kb{\psi}{\phi}+\kb{\psi^\perp}{\phi^\perp}.
\end{equation}
Moreover, following the proof of \cref{th:selftest_S}, we also obtain that the bases $\{\ket{\psi},\ket{\psi^\perp}\}$ and $\{\ket{\phi},\ket{\phi^\perp}\}$ are mutually unbiased, and we can choose all the 
inner products of $\Set{\ket{\psi},\ket{\psi^\perp},\ket{\phi},\ket{\phi^\perp}}$ to be real (see \cref{app:eq:proof_S_cond_2}).

We continue the proof with self-testing of the Hadamard gate. 
The observed correlations for the strings $\vec{x}=\iS\iH\iS$, $\vec{x}=\iSd\iH\iS$ imply that $\tilde{\Lambda}_\iH(\kb{\phi}{\phi}) = \kb{\phi^\perp}{\phi^\perp}$ and $\tilde{\Lambda}_\iH(\kb{\phi^\perp}{\phi^\perp}) = \kb{\phi}{\phi}$, i.e., $\tilde{\Lambda}_\iH$ maps an \ac{ONB} to an \ac{ONB}.
From the input string $\vec{x}=\iH\iH$, we also determine that $\tilde{\Lambda}_\iH\circ\tilde{\Lambda}_\iH(\kb{\psi}{\psi}) = \kb{\psi}{\psi}$, which implies that $\tilde{\Lambda}_\iH(\kb{\psi}{\psi})$ is a pure state (see \cref{lemma:double_channel_purity}).
Combining these, we then conclude that $\tilde{\Lambda}_\iH$ is a unitary channel (see \cref{app:lemma_unitarity}).
Moreover, the corresponding unitary operator $\tilde{U}_\iH$ must be of the form 
\begin{equation}
    \tilde{U}_\iH=\kb{\phi}{\phi^\perp}+\e^{\i\theta}\kb{\phi^\perp}{\phi}
\end{equation}
for some $\theta\in\RR$. 

We can identify the phase $\theta$ by considering the sequence input string $\vec{x}=\iH\iS\iH$. 
Specifically, the observed deterministic behavior implies that $\abs{\bra{\psi}\tilde{U}_\iH \tilde{U}_\iS \tilde{U}_\iH\ket{\psi}}=1$, with $\tilde{U}_\iS$ as in \cref{eq:proof_ideal_S_2}.
This results in the condition
\begin{equation}
 \abs{1+2\e^{\i\theta}-\e^{2\i\theta}}=2\sqrt{2}, 
\end{equation}
which is satisfied if and only if $\e^{\i\theta}=\pm \i$. 
By applying the gauge unitary $U$, this leaves either of the two possibilities $U^\dagger \tilde{U}_\iH U\in\{\Hgate ,\Xgate\Hgate \Xgate\}$ (up to a global phase). 
In the latter case, we absorb the $\Xgate$ gate into the gauge unitary $U\mapsto U\Xgate$, at the cost of interchanging $\Sgate$ and $\Sgate ^\dagger$ (and an additional global phase), which effectively amounts to applying a complex conjugation to $\Sgate$ and $\Sgate^\dagger$.
Note, that adding $\Xgate$ to the gauge unitary does not change the results for the state and the measurement in \cref{eq:th_uni_state_meas} since $\ket{+},\ket{-}$ are eigenstates of $\Xgate$. 

Finally, we also consider the sequences involving the `$\iT$' input. 
The observed correlations for the input strings $\vec{x}=\iH\iT\iH$ and $\vec{x}=\iS\iS\iH\iT\iH$ imply that the channel $\tilde{\Lambda}_\iT$ maps an \ac{ONB} $\Set{\tilde{U}_\iH\ket{\psi},\tilde{U}_\iH\ket{\psip}}$ to an \ac{ONB} (in fact, to itself).
Moreover, from the sequence $\vec{x}=\iT\iT\iS$ we can deduce that $\tilde{\Lambda}_\iT(\psi)$ is a pure state (see \cref{lemma:double_channel_purity}). 
Moreover, since $\abs{\bra{\psi}\tilde{U}_\iH\ket{\psi}}=1/\sqrt{2}$, we can invoke \cref{app:lemma_unitarity} to conclude that $\tilde{\Lambda}_\iT$ is a unitary channel. 
From $\abs{\bra{\psi} \tilde{U}_\iH \tilde{U}_\iT \tilde{U}_\iH\ket{\psi}}=1$ we deduce, that after applying the gauge unitary (and a possible complex conjugation), we have 
\begin{equation}
    U^\dagger \tilde{U}_\iT U=\kb{0}{0}+\e^{\i\varphi}\kb{1}{1},
\end{equation}
for a suitable phase $\varphi\in \RR$. 
This phase is constrained by the input string $\vec{x} = \iT\iT\iS$, since $\abs{\bra{\psi^\perp} \tilde{U}_\iS \tilde{U}_\iT \tilde{U}_\iT\ket{\psi}}=1$ is equivalent to $\e^{2\i\varphi}=\i$, which leaves the two possibilities $\varphi\in \Set{\frac{\pi}{4},\pi+\frac{\pi}{4}}$,
which we cannot distinguish further with \cref{protocol}, but which lets us to conclude that $\tilde{U}_\iT \in \Set{U\Tgate^{(\ast)}U^\dagger,U\Zgate\Tgate^{(\ast)}U^\dagger}$ (up to the global phase).
This finishes the proof.
\end{proof}

It is possible to modify \cref{protocol} for self-testing the $\Tgate$ gate in the sense of \cref{def:correct_model},
by including sequences such as, e.g., $\vec{x} = \iT$.
However, this means that the target statistics will stop being deterministic, and we will need to estimate the corresponding outcome probabilities up to some precision.
At the same time, it does not mean that the overall sampling complexity should change drastically, because, at least in the ideal case, we will only need to distinguish between the two cases $\abs{\bra{+}\Tgate\ket{+}}^2 = \frac{1}{2}+\frac{1}{2\sqrt{2}}$, and $\abs{\bra{+}\Zgate\Tgate\ket{+}}^2 = \frac{1}{2}-\frac{1}{2\sqrt{2}}$.

\section{Conclusions and outlook}
In this paper, we propose a novel method for certifying quantum gates within a practical scenario where a classical user interacts with a quantum server, considered as a black box.
The method certifies the target gates together with the state preparation and measurement, thus making it free from \ac{SPAM} errors.
Here, we focus on single-qubit gates and prove soundness of certification for a gate set that is universal for single-qubit quantum computation, based on a few minimal assumptions.  
Moreover, for a relevant single-qubit phase gate, which corresponds experimentally to a $\pi/2$-pulse, we show that the sample complexity of our method scales like $\LandauO(\veps^{-1})$ with respect to the average gate infidelity $\veps$.

Within the range of quantum system characterization techniques, the proposed method occupies a unique position and cannot be substituted by any existing tools.
While direct gate certification and tomography are affected by \ac{SPAM} errors, \ac{SPAM}-robust characterization methods are not supported by the soundness guarantees if applied for certification.   
Self-testing either requires two isolated parts of an experimental apparatus or computational assumptions, both posing challenges in a way of it being applied in practice at the current level of technological development of quantum hardware.
This work presents a fresh perspective that can inspire the development of practical and reliable certification techniques for testing quantum computers.

In this work, we focus on certification of single-qubit gates. 
Nevertheless, some of the introduced concepts have the potential for extension.
\Cref{th:selftest_S}, can be used to construct a \emph{fidelity witness} for quantum gates, if \cref{protocol} is modified to estimate the acceptance probability. 
This will be particularly relevant for an experimental demonstration of the proposed method.
The soundness proof of \cref{th:universal} can be extended to multi-qubit quantum gates, as our preliminary analysis suggests.
However, this extension requires new techniques for the case of entangling gates, and, thus, deserves a separate study.

It also seems possible to translate some of the ideas from Ref.~\cite{van2000self} to the framework of the dimension assumption, removing the requirement on the ideal preparation of the computational basis states, assumed therein.
Finally, we find the connection between the classical simulability of quantum computation and the types of quantum gates which can be efficiency certified with deterministic measurement outcomes intriguing, which also deserves a separate investigation. 

\begin{acknowledgments}
We thank Micha\l{} Oszmaniec, Costantino Budroni, and Ingo Roth for inspiring discussions. 
This research was funded by the Deutsche Forschungsgemeinschaft (DFG, German Research Foundation), project numbers 441423094, 236615297 - SFB 1119) and the Fujitsu Services GmbH as part of the endowed professorship ``Quantum Inspired and Quantum Optimization''.
\Cref{fig:user-server} is drawn using tikzpeople package, developed by Nils Fleischhacker.
\Cref{fig:numerics} is generated in MATLAB.
\end{acknowledgments}

\onecolumngrid
\section*{Appendix}
\begin{appendix}
\section{Proof of Theorem~\ref{th:selftest_S}}\label{app:th_selftest_S}
We repeat the statement of the theorem for convenience. 
We omit ``tilde'' over the implemented state, channels, and the measurement to keep the presentation simple. 
\begin{theorem1}
Let $(\rho,\{\Lambda_\iS,\Lambda_\iSd\},\{M_+,M_-\})$ be a single-qubit quantum model that passes with probability at least $1-\veps$ a single repetition of \cref{protocol} with uniform sampling from the gate sequences \eqref{eq:X_S} and for deterministic measurement outcomes \eqref{eq:outcomes_S}. 
Then the quantum model is $\LandauO(\veps)$-close to the target model \eqref{eq:S_gate_target_model}, i.e.,
there is a unitary $U\in\U(2)$ such that
\begin{align}\label{app:eq:th_selftest_S}
\begin{split}
\Fid(\Lambda_\iS, U\Sgate U^\dagger) &\geq 1-\LandauO(\veps),\\
\Fid(\Lambda_\iSd, U\Sgate^\dagger U^\dagger) &\geq 1-\LandauO(\veps)\,,\\
\Tr[\rho U\kb{+}{+}U^\dagger] &\geq 1-\frac{15}{2}\veps,\\
\norm*{M_{+}-U\kb{+}{+}U^\dagger}_\infty &\leq \frac{5}{2}\veps\, .
\end{split}
\end{align}
\end{theorem1}
\begin{proof}
Because the following proof is lengthy and technical in parts, we start by giving a general outline. 
The conclusions of the theorem follow from the condition $\Pass\geq 1-\veps$ and the dimension assumption, that is $\rho,M_+,M_-\in\L(\CC^2)$, and $\Lambda_\iS:\L(\CC^2)\to\L(\CC^2)$, $\Lambda_\iSd:\L(\CC^2)\to\L(\CC^2)$. 
As a first step, we show that for small $\veps$, the measurement effects ${M}_+$ and ${M}_-$ are close to being rank-1 projectors, which we denote as $\psi$ and $\psip$.
Next, we show that \acp{POVM} which one obtains by applying the adjoint maps $\Lambda_\iS^\dagger$ and $\Lambda_\iSd^\dagger$ to $\psi$ and $\psip$ are also close to be projective for small $\veps$.
We denote the corresponding projectors by $\phi$ and $\phip$.
Importantly, we find that $\Lambda_\iS^\dagger(\psi)\approx\phi$ and $\Lambda_\iSd^\dagger(\psi)\approx\phip$, and since the adjoint maps of channels are unital, also $\Lambda_\iS^\dagger(\psip)\approx\phip$ and $\Lambda_\iSd^\dagger(\psip)\approx\phi$.  
Next, we obtain a partial characterization of the Choi state of the channel $\Lambda_\iS$ in the basis of $\psi$ and $\phi$, with the leading terms which we denote as $a_1,a_2,a_3,a_3^\ast$ corresponding to the subspace spanned by $\ket\psi\ket\phi$ and $\ket\psip\ket\phip$.
Here, $a_3$ and $a_3^\ast$ correspond to the off-diagonal terms of the matrix representation of $\Choi(\Lambda_\iS)$, which at this point can only be upper-bounded by $\frac{1}{2}$.
The case $\abs{a_3}\approx \frac{1}{2}$ corresponds to $\Lambda_\iS$ being a unitary channel. 
In order to show that actually $a_3\approx \frac{1}{2}$, we use the conditions $\rho\approx \psi$ and $\Lambda_\iS(\rho)\approx\phip$. 
Finally, we find a suitable gauge unitary $U\in\U(2)$ for which the condition for $\Lambda_\iS$ in \cref{app:eq:th_selftest_S} follows.
Because we obtain characterization of $\Lambda_\iS$ and $\Lambda_\iSd$ in the same basis, the proof also easily extends to the channel $\Lambda_\iSd$.
The bounds for the state and the measurement in \cref{app:eq:th_selftest_S} for the chosen unitary are also immediate.
Showing each step of the above sketch is, in principle, not too technical, but
a lot of involving calculations in the proof are there to ensure the linear scaling of the bounds in \cref{app:eq:th_selftest_S} with respect to $\veps$.

We start the proof by writing the probability of a quantum model, given by $\rho$, $\Lambda_\iS$, $\Lambda_\iSd$, and $\{M_+,M_-\}$ passing a single repetition of the protocol.
\begin{equation}\label{app:eq:pass_S}
\begin{split}
\Pass & = \frac{1}{5}\Big(\tr[M_+\rho]+\tr[M_+\Lambda_\iS\circ\Lambda_\iSd(\rho)]+\tr[M_+\Lambda_\iSd\circ\Lambda_\iS(\rho)]\\
& +\tr[M_-\Lambda_\iS\circ\Lambda_\iS(\rho)]+\tr[M_-\Lambda_\iSd\circ\Lambda_\iSd(\rho)]\Big).
\end{split}
\end{equation}
For simplicity, let us take $\veps$ such that $\Pass\geq 1-\frac{\veps}{5}$, and rescale $\veps$ at the end of the proof.
We separate the condition in \cref{app:eq:pass_S} into $\Tr[M_+\rho]\geq 1-\veps$ and
\begin{equation}\label{app:eq:proof_S_1}
\tr[M_+\Lambda_\iS\circ\Lambda_\iSd(\rho)]+\tr[M_+\Lambda_\iSd\circ\Lambda_\iS(\rho)]-\tr[M_+\Lambda_\iS\circ\Lambda_\iS(\rho)]-\tr[M_+\Lambda_\iSd\circ\Lambda_\iSd(\rho)]\geq 2-\veps.
\end{equation}
Let the eigendecomposition of $M_+$ be $ M_+ = (1-\lambda_+)\psi+\lambda_-\psip$, where $\psi\coloneqq\kb{\psi}{\psi}$, $\psip\coloneqq\kb{\psip}{\psip}$, $\bk{\psi}{\psip} = 0$, and $\lambda_++\lambda_-\leq 1$.
We can then substitute $M_+$ in \cref{app:eq:proof_S_1} with $(1-\lambda_+-\lambda_-)\psi+\lambda_-\1$, and due to the normalization of states and $1\geq (1-\lambda_+-\lambda_-)$, we arrive at the same condition as  \cref{app:eq:proof_S_1}, but with $\psi$ instead of $M_+$.
We also obtain that $\lambda_++\lambda_-\leq\frac{\veps}{2}$, because we can upper-bound the expression, which is multiplied by $(1-\lambda_+-\lambda_-)$ on the left-hand side of \cref{app:eq:proof_S_1} by $2$.  
Next, for each trace, we move the second channel in the sequence to the measurement side, and denote the adjoint maps as $\Lambda_\iS^\dagger$ and $\Lambda_\iSd^\dagger$.
Grouping the terms together, we obtain
\begin{equation}\label{app:eq:proof_S_2}
\tr\left[\left(\Lambda_\iS^\dagger(\psi)-\Lambda_\iSd^\dagger(\psi)\right)\Big(\Lambda_\iSd(\rho)-\Lambda_\iS(\rho)\Big)\right]\geq 2-\veps.
\end{equation}

Let $\Lambda_\iS^\dagger(\psi)-\Lambda_\iSd^\dagger(\psi) = \eta_+\phi-\eta_-\phip$, where $\phi\coloneqq\kb{\phi}{\phi}$, $\phip\coloneqq\kb{\phip}{\phip}$, $\bk{\phi}{\phip} = 0$, and $\eta_+,\eta_-\in [-1,1]$ due to the fact that \ac{POVM} effects are \ac{PSD} and bounded.   
Inserting this eigendecomposition into \cref{app:eq:proof_S_2}, leads to
\begin{equation}\label{app:eq:proof_S_3}
(\eta_++\eta_-)\tr\left[\phi\Big(\Lambda_\iSd(\rho)-\Lambda_\iS(\rho)\Big)\right]\geq 2-\veps.
\end{equation}
Since the trace in \cref{app:eq:proof_S_3} can be at most $1$, and each of $\eta_-$ and $\eta_+$ are upper-bounded by $1$, we conclude that $\eta_-\geq 1-\veps$ and $\eta_+\geq 1-\veps$.
From this conclusion, we arrive at a first set of important conditions that characterize the channels $\Lambda_\iS$ and $\Lambda_\iSd$, namely
\begin{equation}\label{app:eq:proof_S_cond_1}
\tr[\Lambda_\iS^\dagger(\psi)\phi]\geq 1-\veps,\quad \tr[\Lambda_\iS^\dagger(\psip)\phip]\geq 1-\veps,\quad \tr[\Lambda_\iSd^\dagger(\psi)\phip]\geq 1-\veps,\quad \tr[\Lambda_\iSd^\dagger(\psip)\phi]\geq 1-\veps.
\end{equation}

Next, we focus on channel $\Lambda_\iS$ and derive a partial characterization of its Choi state.
We define the Choi-Jamio\l{}kowski state~\cite{Cho75,Jam72}, or the Choi state for short, of a qubit channel $\Lambda$ and the inverse Choi map with respect to the canonical product basis $(\ket{i}\ket{j})_{i,j\in\Set{0,1}}$ in $\CC^4$ as
\begin{equation}
\Choi(\Lambda) \coloneqq \frac{1}{2}\sum_{i,j\in\Set{0,1}}\Lambda(\kb{i}{j})\otimes\kb{i}{j},\quad \Lambda^\dagger(\argdot) = 2\left(\tr_1[(\argdot)\otimes\1\Choi(\Lambda)]\right)^\T,
\end{equation}
where $\tr_1[\argdot]$ denotes the partial trace with respect to the first subsystem. 
Let us specify the matrix representation of $\Choi(\Lambda_\iS)$ in the basis ${\mathrm{ONB}_1}\coloneqq (\ket\psi\ket\phi^\ast,\ket\psip\ket\phip^\ast,\ket\psi\ket\phip^\ast,\ket\psip\ket\phi^\ast)$ as follows
\begin{equation}\label{app:eq:proof_S_char}
\left[\Choi(\Lambda_\iS)\right]_{\mathrm{ONB}_1}=
\begin{bmatrix}
A & B \\
B^\dagger & C 
\end{bmatrix}
\coloneqq
\begin{bmatrix}
	a_1 & a_3 & b_1 & b_2\\
	a_3^\ast & a_2 & b_3^\ast & -b_1^\ast\\
	b_1^\ast & b_3 & c_2 & c_3\\
	b_2^\ast & -b_1 & c_3^\ast & c_1
\end{bmatrix}, 
\end{equation}
where $A,B,C\in\CC^{2\times 2}$ represent the $2\times 2$ blocks of $\left[\Choi(\Lambda_\iS)\right]_{\mathrm{ONB}_1}$, and $a_1,a_2,c_1,c_2\in\RR$, and $a_3,b_1,b_2,b_3,c_3\in \CC$ represent the entries.
From the derived condition in \cref{app:eq:proof_S_cond_1}, we have that $a_1\geq \frac{1}{2}-\frac{\veps}{2}$ and $a_2\geq \frac{1}{2}-\frac{\veps}{2}$.
From the normalization condition $\Tr_1[\Choi(\Lambda_s)] = \frac{\1}{2}$, we have that $c_1=\frac{1}{2}-a_1$ and $c_2=\frac{1}{2}-a_2$, and, therefore, $c_1\leq \frac{\veps}{2}$ and $c_2\leq \frac{\veps}{2}$.
From the \ac{PSD} condition $\left[\Choi(\Lambda_\iS)\right]_{\mathrm{ONB}_1}\geq 0$, we obtain that 
\begin{equation}\label{app:eq:proof_S_upper_bounds_entries}
\abs{a_3}\leq \frac{1}{2},\quad \abs{c_3}\leq \frac{\veps}{2},\quad \abs{b_1}\leq \frac{\sqrt{\veps}}{2},\quad \abs{b_2}\leq \frac{\sqrt{\veps}}{2},\quad \abs{b_3}\leq \frac{\sqrt{\veps}}{2}.
\end{equation}
We can use the above estimates to upper-bound the unwanted terms, i.e., all except for the ones in submatrix $A$, in channel $\Lambda_\iS$.  
However, they are not sufficient for obtaining the linear scaling in $\veps$ of the bounds in \cref{app:eq:th_selftest_S}.
We will also need a tighter upper-bound on $\abs{b_2+b_3}$.

In order to derive a tighter upper-bound on $\abs{b_2+b_3}$, we use the following constraint on the blocks $A,B,C$ that form a \ac{PSD} matrix,
\begin{equation}\label{app:eq:Horn_ineq}
\abs{\bra{v}B\ket{w}}^2\leq \bra{v}A\ket{v}\bra{w}C\ket{w}, \quad \forall \ket{v},\ket{w}\in \CC^2. 
\end{equation}
This result can be found in Ref.~\cite{horn1985matrix} (Theorem 7.7.7), and we also provide a proof of \cref{app:eq:Horn_ineq} in \cref{app:lemmata} for completeness.  
Let us first take $\ket{v} = -\frac{a^{\ast}_3}{\abs{a_3}}\ket{\phip}^\ast+\ket{\phi}^\ast$ and $\ket{w} = \ket{\phip}^\ast$.
The condition in \cref{app:eq:Horn_ineq} then implies
\begin{equation}
\abs*{-\frac{a_3}{\abs{a_3}}b_1^\ast+b_3}^2\leq (a_1+a_2-2\abs{a_3})c_2\leq (1-2\abs{a_3})\frac{\veps}{2}. 
\end{equation}
Next, take $\ket{v} = \ket{\phip}^\ast-\frac{a_3}{\abs{a_3}}\ket{\phi}^\ast$ and $\ket{w} = \ket{\phi}^\ast$, which results in a similar condition,
\begin{equation}
\abs*{b_2+\frac{a_3}{\abs{a_3}}b_1^\ast}^2\leq (a_1+a_2-2\abs{a_3})c_1\leq (1-2\abs{a_3})\frac{\veps}{2}. 
\end{equation}
Using the triangular inequality, we then obtain a condition 
\begin{equation}\label{app:eq:proof_S_upper_bound_b2b3}
\abs{b_2+b_3}\leq \sqrt{2\veps}\sqrt{1-2\abs{a_3}},
\end{equation}
which we use later in the proof. 

We continue the proof by returning to \cref{app:eq:proof_S_3} and using the condition $\eta_++\eta_-\leq 2$, obtain that $\Tr[\phi\Lambda_\iSd(\rho)]\geq 1-\frac{\veps}{2}$ and $\Tr[\phi\Lambda_\iS(\rho)]\leq \frac{\veps}{2}$.
Again, we focus on channel $\Lambda_\iS$ first, and rewrite the aforementioned condition for it as
\begin{equation}\label{app:eq:proof_S_cond_2}
\Tr[\Lambda^\dagger_\iS(\phip)\rho] \geq 1-\frac{\veps}{2}.
\end{equation}
This is the second important condition alongside \cref{app:eq:proof_S_cond_1} that allows us to characterize channel $\Lambda_\iS$.

It is useful at this point of the proof to fix the relative phases between the vectors $\ket{\phi},\ket{\phip},\ket{\psi}$, and $\ket{\psip}$.
Without loss of generality, we set
\begin{equation}\label{app:eq:overlaps_convention}
\bk{\psi}{\phi} = \bk{\psip}{\phip} = \abs{\bk{\psi}{\phi}}, \quad -\bk{\psi}{\phip} = \bk{\psip}{\phi} = \abs{\bk{\psip}{\phi}}.
\end{equation}
Let us first express $\rho$ in the basis $\Set{\ket{\psi},\ket{\psip}}$ as 
\begin{equation}\label{app:eq:proof_S_4}
\rho = d_1\psi + d_2\psip+d_3\kb{\psi}{\psip}+d^\ast_3\kb{\psip}{\psi}.
\end{equation}
From the condition $\Tr[M_+\rho]\geq 1-\veps$, which we obtained directly from \cref{app:eq:pass_S}, and from the condition on the eigenvalues of $M_+$, namely, $\lambda_++\lambda_-\leq \frac{\veps}{2}$, we obtain that $d_1 = \Tr[\psi\rho] \geq 1-\frac{3}{2}\veps$, and, consequently, $d_2\leq \frac{3}{2}\veps$.
From $\rho\geq 0$, we obtain additionally that $\abs{d_3}\leq \LandauO(\sqrt{\veps})$.

From now on, we express the bounds using the Big-O notation, because we are interested in the scaling w.r.t.~$\veps$, and we estimate the constants in our numerical studies in \cref{sec:results}. 
Using the expansion in \cref{app:eq:proof_S_4}, we can reduce the condition in \cref{app:eq:proof_S_cond_2} to
\begin{equation}\label{app:eq:proof_S_5}
\Tr[\Lambda^\dagger_\iS(\phip)\psi] + 2\Re\left[d_3\Tr[\Lambda^\dagger_\iS(\phip)\kb{\psi}{\psip}]\right]\geq 1-\LandauO(\veps).
\end{equation}
We do not simply use the upper bound of $\LandauO(\sqrt{\veps})$ on the second term in \cref{app:eq:proof_S_5}, but instead carefully analyze both terms. 
We use the expansion of $\phip$ in the basis of $(\ket{\psi},\ket{\psip})$ to write the \ac{POVM} effect $\Lambda^\dagger_\iS(\phip)$ as
\begin{equation}\label{app:eq:proof_S_6}
\Lambda^\dagger_\iS(\phip) = \abs{\bk{\psip}{\phi}}^2\Lambda^\dagger_\iS(\psi)+\abs{\bk{\psi}{\phi}}^2\Lambda^\dagger_\iS(\psip) - \abs{\bk{\psi}{\phi}\bk{\psip}{\phi}}\Big(\Lambda^\dagger_\iS(\kb{\psi}{\psip})+\Lambda^\dagger_\iS(\kb{\psip}{\psi})\Big).
\end{equation}
We can use \cref{app:eq:proof_S_6} and the partial characterization of $\Lambda_\iS$ in \cref{app:eq:proof_S_char} to express $\Lambda^\dagger_\iS(\phip)$ in the basis ${\mathrm{ONB}_2}\coloneqq (\ket{\phi},\ket{\phip})$,
\begin{align}\label{app:eq:proof_S_7}
[\Lambda^\dagger_\iS(\phip)]_{\mathrm{ONB}_2} &= 2
\begin{bmatrix}
	c_1\abs{\bk{\psi}{\phi}}^2 & -c_3\abs{\bk{\psi}{\phi}\bk{\psip}{\phi}}\\
	-c_3^\ast\abs{\bk{\psi}{\phi}\bk{\psip}{\phi}} & c_2\abs{\bk{\psip}{\phi}}^2
\end{bmatrix}\\
&+2
\begin{bmatrix}
	a_1\abs{\bk{\psip}{\phi}}^2-2\Re[b_2]\abs{\bk{\psi}{\phi}\bk{\psip}{\phi}} & b_1^\ast(\abs{\bk{\psip}{\phi}}^2-\abs{\bk{\psi}{\phi}}^2)-a_3^\ast\abs{\bk{\psi}{\phi}\bk{\psip}{\phi}}\\
	b_1(\abs{\bk{\psip}{\phi}}^2-\abs{\bk{\psi}{\phi}}^2)-a_3\abs{\bk{\psi}{\phi}\bk{\psip}{\phi}} & a_2\abs{\bk{\psi}{\phi}}^2-2\Re[b_3]\abs{\bk{\psi}{\phi}\bk{\psip}{\phi}}
\end{bmatrix}.\nonumber
\end{align}
The first summand in the above expression can be safely ignored, because its contribution is of the order of $\LandauO(\veps)$, due to the upper-bounds on its entries.
On the other hand, the matrix representations of $\psi$ and $\kb{\psi}{\psip}$ in the basis ${\mathrm{ONB}_2}$, are
\begin{align}\label{app:eq:proof_S_8}
\begin{split}
[\psi]_{\mathrm{ONB}_2} &= 
\begin{bmatrix}
	\abs{\bk{\psi}{\phi}}^2 & -\abs{\bk{\psi}{\phi}\bk{\psip}{\phi}}\\
	-\abs{\bk{\psi}{\phi}\bk{\psip}{\phi}} & \abs{\bk{\psip}{\phi}}^2
\end{bmatrix}, \\
[\kb{\psi}{\psip}]_{\mathrm{ONB}_2} &= 
\begin{bmatrix}
	\abs{\bk{\psi}{\phi}\bk{\psip}{\phi}} & \abs{\bk{\psi}{\phi}}^2 &\\
	-\abs{\bk{\psip}{\phi}}^2 & -\abs{\bk{\psi}{\phi}\bk{\psip}{\phi}}
\end{bmatrix}.
\end{split}
\end{align}

Using \cref{app:eq:proof_S_7} and \cref{app:eq:proof_S_8}, we can upper-bound the first term on the left-hand side of \cref{app:eq:proof_S_5} as
\begin{align}\label{app:eq:proof_S_9}
\begin{split}
\Tr[\Lambda^\dagger_\iS(\phip)\psi] &\leq \LandauO(\veps)+2(a_1+a_2+2\Re[a_3])\abs{\bk{\psi}{\phi}\bk{\psip}{\phi}}^2\\
&-4\abs{\bk{\psi}{\phi}\bk{\psip}{\phi}}\Big(\abs{\bk{\psi}{\phi}}^2\Re[b_2]+\abs{\bk{\psip}{\phi}}^2\Re[b_3]+(\abs{\bk{\psip}{\phi}}^2-\abs{\bk{\psi}{\phi}}^2)\Re[b_1]\Big).
\end{split}
\end{align}
Similarly, the second term on the left-hand side of \cref{app:eq:proof_S_5} can be upper-bounded as
\begin{align}\label{app:eq:proof_S_10}
\begin{split}
\Re\left[d_3\Tr[\Lambda^\dagger_\iS(\phip)\kb{\psi}{\psip}]\right] \leq \LandauO(\veps)+2\Re\Big[d_3\abs{\bk{\psi}{\phi}\bk{\psip}{\phi}}\Big((a_1+a_3^\ast)\abs{\bk{\psip}{\phi}}^2-(a_2+a_3)\abs{\bk{\psi}{\phi}}^2\\
+2\abs{\bk{\psi}{\phi}\bk{\psip}{\phi}}\Re[(b_3-b_2)]\Big)+d_3\Big(\abs{\bk{\psip}{\phi}}^2-\abs{\bk{\psi}{\phi}}^2\Big)\Big(b_1\abs{\bk{\psi}{\phi}}^2-b_1^\ast\abs{\bk{\psip}{\phi}}^2\Big)\Big].
\end{split}
\end{align}
To simplify the estimates of the quantities in \cref{app:eq:proof_S_9} and \cref{app:eq:proof_S_10}, we introduce the last bit of notation, namely two functions $f:[0,1]\to[-1,1]$ and $g:[0,1]\to[0,2]$, such that
\begin{equation}\label{app:eq:proof_S_11}
\abs{\bk{\psi}{\phi}}^2 = \frac{1-f(\veps)}{2},\quad \abs{\bk{\psip}{\phi}}^2 = \frac{1+f(\veps)}{2},\quad \Re[a_3] = \frac{1-g(\veps)}{2}.
\end{equation}
Note, that even though we use $\veps$ as the argument for functions $f$ and $g$, there is no loss of generality in making the above assignments.
In particular, we can take $g(\veps)\geq 0$, because from \cref{app:eq:proof_S_upper_bounds_entries}, we know that $\abs{a_3}\leq \frac{1}{2}$.
Due to the same reason, we can upper-bound the absolute value of the imaginary part of $a_3$ as $\abs{\Im[a_3]}\leq \sqrt{\frac{g(\veps)}{2}}$.

Using the new notations in \cref{app:eq:proof_S_11}, as well as the upper bounds in \cref{app:eq:proof_S_upper_bounds_entries} and \cref{app:eq:proof_S_upper_bound_b2b3}, we can simplify the bound in \cref{app:eq:proof_S_9} as
\begin{equation}
\Tr[\Lambda^\dagger_\iS(\phip)\psi]\leq \frac{1}{2}(1-f(\veps)^2)(2-g(\veps))+\sqrt{1-f(\veps)^2}\sqrt{g(\veps)}\sqrt{2\veps}+2\abs{f(\veps)}\sqrt{\veps}+\LandauO(\veps).
\end{equation} 
Similarly, the bound in \cref{app:eq:proof_S_10} can be simplified as
\begin{equation}
\Re\left[d_3\Tr[\Lambda^\dagger_\iS(\phip)\kb{\psi}{\psip}]\right] \leq \LandauO(\sqrt{\veps})\sqrt{1-f(\veps)^2}\left(\abs{f(\veps)}+\sqrt{g(\veps)}\right)+\LandauO(\veps).
\end{equation}
Combining these two bounds together and inserting them back to the condition in \cref{app:eq:proof_S_5}, we finally arrive at
\begin{equation}
\left(\abs{f(\veps)}-\LandauO(\sqrt{\veps})\right)^2+\frac{1}{2}\left(\sqrt{1-f(\veps)^2}\sqrt{g(\veps)}-\LandauO(\sqrt{\veps})\right)^2\leq \LandauO(\veps).
\end{equation}
This allows us to deduce that $\abs{f(\veps)}\leq \LandauO(\sqrt{\veps})$ and $g(\veps)\leq \LandauO(\veps)$.

As the final part of the proof, we choose the gauge unitary $U$ to be
\begin{equation}\label{app:eq:proof_S_U}
U = \kb{\psi}{+}-\i\kb{\psip}{-}.
\end{equation}
Up to this gauge, the ideal gate $\Sgate$ takes the form
\begin{equation}\label{app:eq_ideal_S}
U\Sgate U^\dagger = \frac{\e^{\i\frac{\pi}{4}}}{\sqrt{2}}\left(\psi+\psip+\kb{\psi}{\psip}-\kb{\psip}{\psi}\right).
\end{equation}
Consequently, we find the Choi state vector $\ket{\Choi(U\Sgate U^\dagger)}$, which we define as $\kb{\Choi(U\Sgate U^\dagger)}{\Choi(U\Sgate U^\dagger)}\coloneqq \Choi(U\Sgate U^\dagger)$,
\begin{align}
\ket{\Choi(U\Sgate U^\dagger)} &= \frac{1}{\sqrt{2}}U\Sgate U^\dagger\otimes\1\left(\ket{0}\ket{0}+\ket{1}\ket{1}\right)\\
&=\frac{\e^{\i\frac{\pi}{4}}}{2}\Big((\abs{\bk{\psi}{\phi}}+\abs{\bk{\psip}{\phi}})(\ket{\psi}\ket{\phi}^\ast+\ket{\psip}\ket{\phip}^\ast)+(\abs{\bk{\psi}{\phi}}-\abs{\bk{\psip}{\phi}})(\ket{\psi}\ket{\phip}^\ast-\ket{\psip}\ket{\phi}^\ast)\Big).\nonumber
\end{align}
Its matrix representation in the $\mathrm{ONB}_1$ is 
\begin{equation}\label{app:eq:proof_S_ideal_Choi}
\left[\ket{\Choi(U\Sgate U^\dagger)}\right]_{\mathrm{ONB}_1} = \frac{\e^{\i\frac{\pi}{4}}}{2}
\begin{bmatrix}
\abs{\bk{\psi}{\phi}}+\abs{\bk{\psip}{\phi}}\\
\abs{\bk{\psi}{\phi}}+\abs{\bk{\psip}{\phi}}\\
\abs{\bk{\psi}{\phi}}-\abs{\bk{\psip}{\phi}}\\
-(\abs{\bk{\psi}{\phi}}-\abs{\bk{\psip}{\phi}})
\end{bmatrix}.
\end{equation}
Having the explicit forms of the Choi states in \cref{app:eq:proof_S_char} and \cref{app:eq:proof_S_ideal_Choi} allows us to estimate their inner product,
\begin{align}\label{app:eq:proof_S_final}
\tr\left[\Choi(\Lambda_\iS)\Choi(U\Sgate U^\dagger)\right] &= \frac{1}{4}\Big((\abs{\bk{\psi}{\phi}}+\abs{\bk{\psip}{\phi}})^2(a_1+a_2+2\Re[a_3])\\
&+(\abs{\bk{\psi}{\phi}}^2-\abs{\bk{\psip}{\phi}}^2)(2\Re[2b_1-b_2+b_3)\Big)+\LandauO(\veps)\\
&\geq \frac{1}{4}\Big((1+\sqrt{1-f(\veps)^2})(1-\veps+1-g(\veps))-\abs{f(\veps)}\LandauO(\sqrt{\veps})\Big)+\LandauO(\veps). 
\end{align}
Inserting the bounds $\abs{f(\veps)}\leq \LandauO(\sqrt{\veps})$ and $g(\veps)\leq \LandauO(\veps)$ leads to the lower bound of $1-\LandauO(\veps)$ on the inner product of the Choi states of $\Lambda_\iS$ and the target unitary channel with the unitary operator $U\Sgate U^\dagger$.
This inner product is sometimes referred to as the entanglement fidelity~\cite{Kliesch2020TheoryOfQuantum}, which is related to the average gate fidelity through a known relation~\cite{Kliesch2020TheoryOfQuantum},
\begin{equation}\label{app:eq:proof_S_fidelity}
\Fid(\Lambda_\iS,U\Sgate U^\dagger) = \frac{2}{3}\Tr\left[\Choi(\Lambda_\iS)\Choi(U\Sgate U^\dagger)\right]+\frac{1}{3}.
\end{equation}
\Cref{app:eq:proof_S_fidelity} leads directly to the first claim of the theorem in \cref{app:eq:th_selftest_S}.  

The proof for channel $\Lambda_\iSd$ follows exactly the same steps as for channel $\Lambda_\iS$. 
It is important, however, that the lower bound on $\tr\left[\Choi(\Lambda_\iSd)\Choi(U\Sgate^\dagger U^\dagger)\right]$ is shown to hold for the same gauge unitary $U$ in \cref{app:eq:proof_S_U}.
The main difference from the case of $\Lambda_\iS$, is that roles of states $\phi$ and $\phip$ are swapped, and in $\mathrm{ONB}_1$, the matrix representation of $\Choi(\Lambda_\iSd)$ has the leading terms in the block $C$ rather than the block $A$, if we look at \cref{app:eq:proof_S_char}. 
We can notice that the matrix representation of the Choi state vector $\ket{\Choi(U\Sgate^\dagger U^\dagger)}$ in the same basis is
\begin{equation}\label{app:eq:proof_S_ideal_Choi_Sd}
\left[\ket{\Choi(U\Sgate^\dagger U^\dagger)}\right]_{\mathrm{ONB}_1} = \frac{\e^{-\i\frac{\pi}{4}}}{2}
\begin{bmatrix}
\abs{\bk{\psi}{\phi}}-\abs{\bk{\psip}{\phi}}\\
\abs{\bk{\psi}{\phi}}-\abs{\bk{\psip}{\phi}}\\
-(\abs{\bk{\psi}{\phi}}+\abs{\bk{\psip}{\phi}})\\
\abs{\bk{\psi}{\phi}}+\abs{\bk{\psip}{\phi}}
\end{bmatrix},
\end{equation}
again with the leading terms in the lower half of the vector. 
Apart from that, the reasoning is exactly the same, and the second claim in \cref{app:eq:th_selftest_S} follows.

As for the third and fourth claims in \cref{app:eq:th_selftest_S}, we notice that $U\kb{+}{+}U^\dagger = \psi$, and since we already showed that $\tr[\rho\psi]\geq 1-\frac{3}{2}\veps$ when characterizing $\rho$ in \cref{app:eq:proof_S_4}, we directly conclude that $\tr[\rho U\kb{+}{+}U^\dagger] \geq 1-\frac{3}{2}\veps$. 
Since $M_+ = (1-\lambda_+)\psi+\lambda_-\psip$, we also immediately conclude that $\norm*{M_+-U\kb{+}{+}U^\dagger}_\infty = \max\{\lambda_+,\lambda_-\}\leq \frac{\veps}{2}$.
Note, that at the beginning of the proof we rescaled $\veps$ by the factor of $5$.
This finishes the proof.
\end{proof}

\section{Supporting Lemmata}\label{app:lemmata}
In this section of Appendix, we list the supporting lemmata.

\begin{lemma}\label{app:lemma_Horn}
For a \ac{PSD} matrix $\Gamma = \begin{bmatrix} A & B \\ B^\dagger & C \end{bmatrix}$, with $A,C,B\in\CC^{n\times n}$ it holds that
\begin{equation}
\abs{\bra{v}B\ket{w}}^2\leq \bra{v}A\ket{v}\bra{w}C\ket{w}, \quad \forall \ket{v},\ket{w}\in \CC^n. 
\end{equation}
\end{lemma}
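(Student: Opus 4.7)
The plan is to exploit positivity of $\Gamma$ on a one-parameter family of test vectors and reduce the claim to the non-negativity of the discriminant of a real quadratic, i.e., to a $2\times 2$ block version of the Cauchy--Schwarz inequality. For fixed $\ket{v},\ket{w}\in\CC^{n}$ and a complex scalar $\alpha$, I would consider the test vector $\ket{\xi}=\begin{pmatrix}\ket{v}\\ \alpha\ket{w}\end{pmatrix}\in\CC^{2n}$. Since $\Gamma\geq 0$, a direct block-wise expansion together with the identity $\bra{w}B^{\dagger}\ket{v}=\overline{\bra{v}B\ket{w}}$ yields
\begin{equation}
0\leq\bra{\xi}\Gamma\ket{\xi}=\bra{v}A\ket{v}+2\Re\!\left(\alpha\bra{v}B\ket{w}\right)+\abs{\alpha}^{2}\bra{w}C\ket{w}.
\end{equation}

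Next I would optimize over $\alpha$. Writing $\bra{v}B\ket{w}=r\,\e^{\i\theta}$ with $r=\abs{\bra{v}B\ket{w}}$ and setting $\alpha=-t\,\e^{-\i\theta}$ for a real parameter $t$, the cross term becomes $-2tr$ and the inequality reduces to
\begin{equation}
\bra{w}C\ket{w}\,t^{2}-2r\,t+\bra{v}A\ket{v}\geq 0\qquad\text{for all }t\in\RR.
\end{equation}
When $\bra{w}C\ket{w}>0$, nonnegativity of a real quadratic forces its discriminant to be non-positive, which gives exactly $r^{2}\leq\bra{v}A\ket{v}\bra{w}C\ket{w}$, proving the claim.

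The only thing to check separately is the degenerate case $\bra{w}C\ket{w}=0$, where the reduced inequality becomes linear in $t$; nonnegativity for all real $t$ then forces $r=0$, and the claimed bound is trivially satisfied. The symmetric case $\bra{v}A\ket{v}=0$ is handled identically by swapping the roles of $A$ and $C$. I do not expect any real obstacle here: the argument is just the standard trick of testing positivity of a Hermitian block matrix on vectors of the form $(\ket{v},\alpha\ket{w})^{\T}$, with the minor care needed to rotate the complex cross term onto the negative real axis before invoking the discriminant test. This is precisely the route taken in the proof of Theorem~7.7.7 of Ref.~\cite{horn1985matrix}, to which the body of the paper already points.
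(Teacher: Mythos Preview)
Your proof is correct and follows essentially the same idea as the paper's: both exploit positivity of $\Gamma$ on vectors of the form $(\ket{v},\alpha\ket{w})^{\T}$. The paper packages this by forming the $2\times 2$ compression $K^{\dagger}\Gamma K$ with $K=\ket{0}\ket{v}\bra{0}+\ket{1}\ket{w}\bra{1}$ and invoking non-negativity of its determinant, which is equivalent to your discriminant argument and sidesteps the separate degenerate-case analysis.
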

\begin{proof}
Let $K \coloneqq \ket{0}\ket{v}\bra{0}+\ket{1}\ket{w}\bra{1} \in \CC^{2n\times 2}$.
We can write $\Gamma = \kb{0}{0}\otimes A + \kb{0}{1}\otimes B+\kb{1}{0}\otimes B^\dagger + \kb{1}{1}\otimes C$, and express $K^\dagger\Gamma K$,
\begin{align}
\begin{split}
K^\dagger\Gamma K &= (\ket{0}\bra{0}\bra{v}+\ket{1}\bra{1}\bra{w})(\kb{0}{0}\otimes A + \kb{0}{1}\otimes B+\kb{1}{0}\otimes B^\dagger + \kb{1}{1}\otimes C)(\ket{0}\ket{v}\bra{0}+\ket{1}\ket{w}\bra{1})\\
&= \kb{0}{0}\bra{v}A\ket{v}+\kb{0}{1}\bra{v}B\ket{w}+\kb{1}{0}\bra{w}B^\dagger\ket{v}+\kb{1}{1}\bra{w}C\ket{w}.
\end{split}
\end{align} 
The matrix representation of $K^\dagger\Gamma K$ in the computational basis is therefore $K^\dagger\Gamma K = \begin{bmatrix} \bra{v}A\ket{v} & \bra{v}B\ket{w} \\ \bra{w}B^\dagger\ket{v} & \bra{w}C\ket{w} \end{bmatrix}$.
Since $\Gamma\geq 0$, then also $K^\dagger\Gamma K\geq 0$, since $K^\dagger(\argdot)K$ is \ac{CP}.
The claim of the lemma then follows from non-negativity of the determinant of $K^\dagger\Gamma K$.
\end{proof}
In Ref.~\cite{horn1985matrix}, the above lemma is stated as part of a theorem (Theorem 7.7.7), which holds for $A>0$ and $C>0$.
Therefore, we preset the proof above for completeness, to account for the cases of non-invertible $A$ and $C$.

\begin{lemma}
\label{lemma:double_channel_purity}
Given a qubit channel $\Lambda: \L(\CC^2)\to\L(\CC^2)$, if $\Lambda \circ \Lambda (\kb{\psi}{\psi})=\kb{\phi}{\phi}$, for any two $\ket{\psi},\ket{\phi}\in \CC^2$, then $\Lambda(\kb{\psi}{\psi})$ is a pure state. 
\end{lemma}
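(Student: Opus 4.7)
The plan is to exploit the fact that pure states are extreme points of the qubit state space, combined with the linearity of $\Lambda$. I would write the spectral decomposition
\[
\Lambda(\kb{\psi}{\psi}) \;=\; p\kb{\alpha}{\alpha} + (1-p)\kb{\alpha^\perp}{\alpha^\perp},
\]
with $p\in[0,1]$ and $\bk{\alpha}{\alpha^\perp}=0$. The case $p\in\{0,1\}$ immediately gives the claim, so the task reduces to ruling out the mixed case $p\in(0,1)$.

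Suppose for contradiction that $p\in(0,1)$. Applying $\Lambda$ once more and using linearity yields
\[
\kb{\phi}{\phi} \;=\; p\,\Lambda(\kb{\alpha}{\alpha}) + (1-p)\,\Lambda(\kb{\alpha^\perp}{\alpha^\perp}).
\]
Since $\kb{\phi}{\phi}$ is an extreme point of the convex set of states, a non-trivial convex combination of states can equal it only when every summand already equals it. Therefore $\Lambda(\kb{\alpha}{\alpha}) = \Lambda(\kb{\alpha^\perp}{\alpha^\perp}) = \kb{\phi}{\phi}$.

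Next, I would fix a Kraus representation $\Lambda(\,\cdot\,) = \sum_k K_k(\,\cdot\,)K_k^\dagger$ and observe that $\sum_k (K_k\ket{\alpha})(K_k\ket{\alpha})^\dagger = \kb{\phi}{\phi}$ forces every $K_k\ket{\alpha}$ to be a scalar multiple of $\ket{\phi}$, because a sum of rank-one positive semidefinite operators can equal a rank-one positive semidefinite operator only if each summand lies in its one-dimensional range. The same conclusion holds for every $K_k\ket{\alpha^\perp}$. Since $\Set{\ket{\alpha},\ket{\alpha^\perp}}$ spans $\CC^2$, each Kraus operator has the form $K_k = \ket{\phi}\bra{v_k}$ for some $\ket{v_k}\in\CC^2$, and the Kraus closure relation fixes $\sum_k\kb{v_k}{v_k}=\1$. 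Consequently $\Lambda(\rho)=\bigl(\sum_k\bra{v_k}\rho\ket{v_k}\bigr)\kb{\phi}{\phi}=\kb{\phi}{\phi}$ for every state $\rho$, in particular $\Lambda(\kb{\psi}{\psi})=\kb{\phi}{\phi}$, contradicting the assumption that $\Lambda(\kb{\psi}{\psi})$ is mixed.

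The only delicate step I anticipate is the rank-one extraction from the Kraus sum; this is a clean consequence of the fact that the range of a sum of positive semidefinite operators contains the range of each summand, but deserves a careful one-line justification. It is worth noting that the two-dimensionality of the input space is essential: it ensures that the images of just two orthogonal pure states already pin down $\Lambda$ on all of $\L(\CC^2)$, which is precisely what drives the contradiction through.
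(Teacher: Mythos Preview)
Your proof is correct and follows essentially the same strategy as the paper: assume the intermediate state is mixed, use extremality of pure states to force $\Lambda(\kb{\alpha}{\alpha})=\Lambda(\kb{\alpha^\perp}{\alpha^\perp})=\kb{\phi}{\phi}$, deduce that $\Lambda$ is the constant channel outputting $\kb{\phi}{\phi}$, and reach a contradiction. The only difference is that where you invoke a Kraus representation to show $\Lambda$ is constant, the paper argues more directly that $\Lambda(\1)=2\kb{\phi}{\phi}$ and then uses positivity of $\Lambda(\kb{\psip}{\psip})=2\kb{\phi}{\phi}-\Lambda(\kb{\psi}{\psi})$ to force $\Lambda(\kb{\psi}{\psi})=\kb{\phi}{\phi}$.
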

\begin{proof}
Assume the opposite, that is $\Lambda (\kb{\psi}{\psi}) = \lambda\kb{\theta}{\theta}+(1-\lambda)\kb{\theta^\perp}{\theta^\perp}$ for some $\ket{\theta}\in\CC^2$, and $\lambda\in(0,1)$. 
Then, from linearity it must hold that $\lambda\Lambda(\kb{\theta}{\theta})+(1-\lambda)\Lambda(\kb{\theta^\perp}{\theta^\perp}) = \kb{\phi}{\phi}$, which is only possible if $\Lambda(\kb{\theta}{\theta})=\Lambda(\kb{\theta^\perp}{\theta^\perp})$, and hence $\Lambda(\openone) = 2\kb{\phi}{\phi}$, which means that $\Lambda$ is a measure-and-prepare channel, and, in particular, $\Lambda(\kb{\psi}{\psi})= \kb{\phi}{\phi}$.
Indeed, if $\Lambda(\kb{\psi}{\psi})\neq \kb{\phi}{\phi}$, then due to $\Lambda(\openone) = 2\kb{\phi}{\phi}$, we would have that $\Lambda(\kb{\psip}{\psip}) = 2\kb{\phi}{\phi}-\Lambda(\kb{\psi}{\psi})$ is not \ac{PSD}.
We reach the contradiction, because we assumed that $\Lambda(\kb{\psi}{\psi})$ is not pure.
\end{proof}

\begin{lemma}\label{app:lemma_unitarity}
    Let $\Lambda: \L(\CC^2)\to\L(\CC^2)$ be a qubit channel which maps an \ac{ONB} $\{\ket{\psi},\ket{\psi^\perp}\}$ to an \ac{ONB} in $\CC^2$. 
    Let further $\Lambda(\kb{\varphi}{\varphi})$ be a pure state for some other state vector $\ket{\varphi}$, such that $0<\abs{\bk{\varphi}{\psi}}<1$. 
    Then the channel $\Lambda$ is unitary.
\end{lemma}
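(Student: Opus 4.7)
The plan is to pick a Kraus decomposition $\Lambda(\rho)=\sum_i K_i\rho K_i^\dagger$ with $\sum_i K_i^\dagger K_i=\openone$, and to show that every $K_i$ is a common scalar multiple of a single unitary $U$; this would immediately yield $\Lambda(\argdot)=U(\argdot)U^\dagger$.

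First, I would exploit purity of the two basis-state images. Writing $\Lambda(\kb{\psi}{\psi})=\kb{\alpha}{\alpha}$ and $\Lambda(\kb{\psi^\perp}{\psi^\perp})=\kb{\alpha^\perp}{\alpha^\perp}$ (with $\bk{\alpha}{\alpha^\perp}=0$ by assumption), the elementary fact that a sum $\sum_i\kb{w_i}{w_i}$ of rank-one PSD terms equals a rank-one PSD operator only when all $\ket{w_i}$ are parallel forces $K_i\ket{\psi}=c_i\ket{\alpha}$ and $K_i\ket{\psi^\perp}=d_i\ket{\alpha^\perp}$ for scalars $c_i,d_i\in\CC$, with $\sum_i|c_i|^2=\sum_i|d_i|^2=1$ following from trace preservation.

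Next, I would use the third state $\ket{\varphi}=a\ket{\psi}+b\ket{\psi^\perp}$ with $ab\neq 0$ guaranteed by $0<\abs{\bk{\varphi}{\psi}}<1$. Then $K_i\ket{\varphi}=ac_i\ket{\alpha}+bd_i\ket{\alpha^\perp}$, and purity of $\Lambda(\kb{\varphi}{\varphi})$ again demands that all these vectors be parallel. Because $\ket{\alpha}\perp\ket{\alpha^\perp}$, this collapses to the cross relations $c_i d_j=c_j d_i$ for all $i,j$, so the pair $(c_i,d_i)$ is proportional to a single fixed pair $(\alpha_0,\beta_0)$; comparing the two normalizations then gives $|\beta_0|=|\alpha_0|$, and hence $|\mu|=1$ for $\mu:=\beta_0/\alpha_0$.

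Finally, the operator $U:=\kb{\alpha}{\psi}+\mu\kb{\alpha^\perp}{\psi^\perp}$ is a unitary on $\CC^2$ precisely because $|\mu|=1$, and one reads off $K_i=c_i\alpha_0^{-1}U$, which gives $\Lambda(\rho)=\bigl(\sum_i|c_i|^2\bigr)U\rho U^\dagger=U\rho U^\dagger$, so $\Lambda$ is unitary. The one subtlety — what I view as the main obstacle worth checking — is ruling out $\alpha_0=0$: that would force $c_i=0$ for every $i$, contradicting $\sum_i|c_i|^2=1$; the symmetric argument rules out $\beta_0=0$, so $\mu$ is well defined and the construction of $U$ goes through.
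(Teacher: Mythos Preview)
Your proof is correct and takes a genuinely different route from the paper's. The paper works directly at the level of the channel action: from $\Lambda(\kb{\psi}{\psi})=\kb{\phi}{\phi}$ and $\Lambda(\kb{\psi^\perp}{\psi^\perp})=\kb{\phi^\perp}{\phi^\perp}$ together with the \ac{CPTP} property it asserts that $\Lambda(\kb{\psi}{\psi^\perp})=z\,\kb{\phi}{\phi^\perp}$ for some $\abs{z}\le 1$, then computes $\tr[\Lambda(\kb{\varphi}{\varphi})^2]=a^2+(1-a)^2+2a(1-a)\abs{z}^2$ and sets it equal to $1$ to force $\abs{z}=1$. You instead pass through a Kraus decomposition, using the rank-one constraint on each image to pin down $K_i\ket{\psi}=c_i\ket{\alpha}$, $K_i\ket{\psi^\perp}=d_i\ket{\alpha^\perp}$, and then extract the proportionality $(c_i,d_i)\propto(\alpha_0,\beta_0)$ from the third pure image. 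The two arguments are essentially dual: your relation $d_i=\mu c_i$ with $\abs{\mu}=1$ is precisely what makes the paper's $z=\sum_i c_i d_i^\ast$ saturate Cauchy--Schwarz. Your version has the advantage of making the step ``\ac{CPTP} implies $\Lambda(\kb{\psi}{\psi^\perp})\propto\kb{\phi}{\phi^\perp}$'' fully transparent, which the paper states without justification; the paper's purity computation is in turn slightly more compact. One harmless slip: since $d_i/c_i=\mu$ directly, the scalar should read $K_i=c_i\,U$ rather than $K_i=c_i\alpha_0^{-1}U$; either way trace preservation fixes the overall prefactor to $1$.
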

\begin{proof}
    Let $\Set{\ket{\phi},\ket{\phip}}$ be an \ac{ONB} such that $\kb{\phi}{\phi} = \Lambda(\kb{\psi}{\psi})$ and $\kb{\phip}{\phip} =\Lambda(\kb{\psip}{\psip})$. 
    From the \ac{CPTP} condition, we conclude that $\Lambda(\kb{\psi}{\psip}) = z\kb{\phi}{\phip}$, and $\Lambda(\kb{\psip}{\psi}) = z^\ast\kb{\phip}{\phi}$, for some $z\in\CC$, with $\abs{z}\leq 1$.
    
    Let $\ket{\varphi} = \sqrt{a}\ket{\psi}+\sqrt{1-a}\ket{\psip}$ for some $a\in\RR$ (which we can always achieve by fixing the global phases of $\ket{\psi}$ and $\ket{\psip}$), and write
    \begin{equation}\label{app:eq_lemma_unitarity_1}
        \Lambda(\kb{\varphi}{\varphi})=a\kb{\phi}{\phi}+(1-a)\kb{\phi^\perp}{\phi^\perp}+z \sqrt{a(1-a)}\kb{\phi}{\phi^\perp}+z^\ast \sqrt{a(1-a)}\kb{\phi^\perp}{\phi}.
    \end{equation}
    The purity of $\Lambda(\kb{\varphi}{\varphi})$ in \cref{app:eq_lemma_unitarity_1} leads to the condition
    \begin{equation}
        1=\tr[(\Lambda(\kb{\varphi}{\varphi}))^2]=a^2+(1-a)^2+2a(1-a)\abs{z}^2.
    \end{equation}
    From the assumptions on $\ket{\varphi}$, we have $0<a<1$, and hence $\abs{z}=1$. 
    From here it is straightforward to see that
    $\Lambda(\argdot)=U(\argdot) U^\dagger$ for the unitary $U=\kb{\phi}{\psi}+z^\ast\kb{\phi^\perp}{\psi^\perp}$.
\end{proof}
\end{appendix}

\twocolumngrid
\bibliographystyle{./myapsrev4-2}
\bibliography{ref,mk}

\end{document}